\newcommand{\added}[1]{{\color{added}{}#1}}
\newcommand{\deleted}[1]{{\color{deleted}\sout{#1}}}
\titleformat{\section}[block]{\bfseries\filcenter}{\thesection.}{1em}{}
\titleformat{\subsection}[hang]{\bfseries}{\thesubsection.}{1em}{}
\newtheorem{assumption}{Assumption}
\newtheorem{prop}{Proposition}
\newcommand{\real}{\mathbb{R}}
\providecommand{\abs}[1]{\lvert#1\rvert}
\begin{document}
\baselineskip=21pt
\def\spacingset#1{\renewcommand{\baselinestretch}%
{#1}\small\normalsize} \spacingset{1}

\title{\Large{On the Forecast Combination Puzzle}}
 \author{Wei Qian, Craig A. Rolling\thanks{Co-first author}, Gang Cheng, Yuhong Yang}
\date{}
\maketitle

\noindent


\noindent
{\bf Abstract.} It is often reported in forecast combination literature that a
simple average of candidate forecasts is more robust than
sophisticated combining methods. This phenomenon is usually referred to as the
``forecast combination puzzle''. Motivated by this
puzzle, we explore its possible explanations including
estimation error, invalid weighting formulas and model screening. We
show that existing understanding of the puzzle should be complemented by
the distinction of different  {forecast combination} scenarios known as
combining for adaptation and combining for improvement. Applying
combining methods without consideration of the underlying scenario can
itself cause the puzzle. Based on our new understandings, both simulations and real
data evaluations are conducted to illustrate the causes of the puzzle. We further propose a multi-level AFTER strategy
that can integrate the strengths of different combining methods and adapt
intelligently to the underlying
scenario. In particular, by treating the simple average as a 
candidate forecast, the proposed strategy is shown to avoid the heavy cost of
estimation error and, to a large extent, solve the forecast combination puzzle.\\

\noindent
{\bf Key Words:}  combining for adaptation, combining for improvement,
multi-level AFTER, model selection, {structural} break

\spacingset{1.45}

\section{Introduction}

Since the seminal work of \citet{bates1969combination}, both empirical
and theoretical investigations support that when multiple candidate forecasts for a target variable are available
to an analyst, forecast combination often provides more
accurate and robust forecasting performance in terms of mean square forecast
error (MSFE) than using a single candidate forecast. The benefits of
forecast combination are attributable to the facts that individual
forecasts often use different sets of information, are subject to model bias
from different but unknown model misspecifications, and/or are 
varyingly affected by {structural} breaks. The review of \citet{timmermann2006forecast} provides a comprehensive account of various forecast combination
methods. In particular, one popular method is to combine forecasts
by estimating a theoretically optimal weight through the
minimization of  mean square error (MSE). For example, \citet{bates1969combination}
propose to find the optimal weight using error variance-covariance structure of
the individual forecasts. \citet{granger1984improved}  
construct the optimal weight under a linear
regression framework.

Despite the
ever-increasing popularity and sophistication of combining methods,
it is repeatedly
reported from past literature that the simple
average (SA) is a very effective and robust forecast combination method that
often outperforms more complicated combining methods  (see
\citet{winkler1983combination}, \citet{clemen1986combining} and
\citet{diebold1990use} for some early examples). In a review and
annotated bibliography on earlier studies, \citet{clemen1989combining}
raises the question, ``What is the explanation for the robustness
of the simple average of forecasts?''. Specifically, he proposes two questions of
interest, ``(1) Why does {the simple average} work so well, and (2) under what conditions do
other specific methods work better?'' The robustness of SA is also echoed in more recent literature. For example,
\citet{stock2004combination} build autoregressive models with
univariate predictors (macroeconomic variables) as candidate
forecasts for output growth of seven developed countries, and find
that SA, together with other methods of least data adaptivity,
is among the top-performing forecast combination methods.
\citet{stock2004combination} further coin the term ``{\bf F}orecast
{\bf C}ombination
{\bf P}uzzle'' (for brevity, we refer to the puzzle as FCP {hereafter} ), which refers to ``the repeated finding that simple
combination forecasts outperform sophisticated adaptive combination
methods in empirical applications''. In another recent example,  \citet{genre2013combining} use survey data from professional
forecasters as the individual candidates to construct combined
forecasts for three target variables. Despite some promising results
of  complicated methods, they further note that
the observed improvement over SA is rather vague when a period
 of financial crisis is included in the analysis. The past empirical evidence
appears to support the mysterious existence of FCP, which is also  summarized in \citet[section 7.1]{timmermann2006forecast}.

Many attempts have been made to demystify FCP.  One popular and arguably the most well-studied explanation for FCP
is the estimation error of the combining methods that
rely on the optimal weight estimation by MSE minimization.  \citet{smith2009simple} 
rigorously study the estimation error issue. Using the forecast error
variance-covariance structure, they show both theoretically and
numerically that  the estimator targeting the optimal weight can have large variance and
consequently, the estimated optimal weight can be very different from
the true optimal weight, often even more so than simple equal
weight. \citet{elliott2011averaging} studies the theoretical maximal
performance gain of the optimal weight over SA by optimizing the error
variance-covariance structure, and points out that the gain is often
small enough to be overshadowed by estimation error. \citet{timmermann2006forecast}
and \citet{hsiao2014there} also illustrate conditions for the optimal weight
to be close to the equal weight so that the relative gain of the optimal
weight over SA is small.  \citet{claeskens2014forecast} consider the random weight and show that
when the weight variance is taken into account, SA can perform better than
using the ``optimal'' weight. Under linear
regression settings, \citet{huang2010combine} discuss the estimation
error and the relative gain of the optimal weight. 

In addition to
estimation error, nonstationarity and structural breaks in {the} data generating process (DGP) are believed to contribute to the
unstable performance of the estimated ``optimal'' weight. For example,
\citet{hendry2004pooling} demonstrate that when candidate forecasting
models are all misspecified and breaks
occur in the information variables, forecast combination methods that target the
optimal weight may not perform as well as SA. Also,
\citet{huang2010combine} propose that the candidate forecasts
are often weak, that is, they have low predictive content on the
target variable, making the optimal weight similar to simple equal weight. 

While the aforementioned points are valid and valuable, they do not depict the complete picture of the puzzle. In this paper, we provide our perspectives on FCP to contribute to its settling. In  our view, besides providing explanations of FCP, it is also very important to point out the potential danger of recommending SA for broad and indiscriminate use. Here, we focus on the mean squared error (MSE). It should be pointed out that the main points are expected to stand for other losses as well (e.g., absolute error) and {that} some combination approaches (e.g., AFTER) can handle general loss functions.  

The rest of this article is organized as follows. In section 2, we list some aspects that have not been much addressed but are important towards the understanding of FCP in our view. We formally introduce
the problem setup of the forecast combination problem we consider in
section~\ref{sec:setup}.  Our understandings of FCP are
elaborated in sections~\ref{sec:two_goals}-\ref{subsec:screening}.  {In particular, section~\ref{sec:cocktail0} proposes a multi-level AFTER approach to solve FCP.} The
performance of this approach is also evaluated in section~\ref{sec:real_data}
using a U.S. Survey of Professional Forecasters (SPF) data. A brief conclusion is given in section~\ref{sec:conclusion}.

\section{Additional Aspects of FCP}

The previous work has nicely pointed out that
estimation error is an important source of FCP and has characterized the impact of the estimation error in idealized settings. Indeed, in general, when the
forecast combination weighting formula is valid in the sense that an
optimal weight can be correctly estimated by minimizing MSE, insufficiently small sample size  may not support
reliable estimation of the weight, resulting in inflated variance of
the combined forecast. The explanation with {structural} breaks also makes sense for certain situations. However, in our view,
there are several  additional aspects that need to be considered for understanding FCP.

\begin{enumerate}

\item A key factor missing in addressing the FCP is the true nature of improvability of the candidate forecasts. While we all strive for better forecast performance than the candidates, that may not be feasible (at least for the methods considered). Thus we have two scenarios \citep{yang2004combining}: i) One of the candidates is pretty much the best we can hope for (within the considerations of course) and consequently any attempt to beat it will not succeed. We refer to this
scenario as ``Combining for Adaptation'' (CFA), because the proper goal of a forecast combination method under this
scenario should be targeting the performance of the best individual candidate
forecast, which is unknown. ii) The other is that a significant gain of accuracy over all the individual candidates can be materialized.  
 We refer to this scenario
as ``Combining for Improvement'' (CFI),
because the proper goal of a forecast combination method under this
scenario should be targeting
the performance of the best combination of the candidate
forecasts to overcome defects of the candidates. In our experience, both scenarios occur commonly in real problems. Without factoring in this aspect, comparison of different combination methods may be grossly misleading due to the well-known sin of comparing apples to oranges.  In our view, empirical studies on forecast combinations in the future need to bring this lurking aspect into the analysis. With the above forecast combination scenarios spelled out, a natural question {follows}: Can we design a combination method to bridge the two camps of methods proposed for the two scenarios respectively{,} so as to help solve the FCP? 

\item  The methods being examined in the literature on FCP are mostly specific choices (e.g., least squares estimation).  Can we do better with other methods ({that} may or may not have been invented {yet}) to avoid the heavy estimation price? Also, the currently investigated methods often assume the forecasts are unbiased and the forecast errors are stationary, which may not be proper for many applications. What happens when {these assumptions} fail?

\item It has been stated in the literature that the simple methods (e.g., SA) are robust based on empirical {studies}. We feel this is not necessarily true in the usual statistical sense (rigorously or loosely). In many published empirical results, the candidate forecasts were carefully selected/built and thus well-{behaved}.  Therefore, the finding in favor of robustness of SA may be proper only for such situations that the data analyst has extensive expertise on the forecasting problem and has done quite a bit { of }work on screening out poor/un-useful candidates. We argue that it is much more desirable to investigate FCP broadly so as to allow the possibility of poor/redundant candidates for wider and more realistic applications. It should be added that in various situations, the screening of forecasts is far from an easy task and its complexity may well be at the same level as model selection/averaging. Therefore, even for top experts, the view that we can do a good job in screening the candidate forecasts and then simply recruit SA is overly optimistic. With the above,  an important matter is to examine the robustness of SA in a broader context.   

\end{enumerate}

As is described in the first item, there are two distinct scenarios:
CFA and CFI.  The CFA scenario can happen if one of the candidate forecasts
is based on a model
sophisticated enough to capture the true DGP (yet still relatively simple), and/or the other candidate
forecasts only add redundant information. The CFI scenario can often
happen when different candidate forecasts use different information,
and/or their underlying models have misspecifications in different ways.   

There are different existing combining methods designed for the two
scenarios. The methods for the CFI scenario 
typically seek to estimate the optimal weight aggressively,
and their examples include
variance-covariance based optimization \citep{bates1969combination} and
linear regression \citep{granger1984improved}. These methods are likely to suffer from estimation error, causing
unstable performance relative to SA. On the other hand, the combining
methods for the CFA scenario should ideally
perform similarly to the best individual candidate forecast and
should not be subject as severely to 
estimation error as the methods for CFI. The
typical methods suitable for
the CFA scenario include AIC model averaging \citep{buckland1997model} and Bayesian model
averaging (e.g., \citealp{garratt2003forecast}), both in parametric settings. {The method of AFTER \citep{yang2004combining} can be applied more broadly in parametric and non-parametric settings, regardless of the nature of the candidate forecasts.} As one of the main contributions in this article, we show that the distinction between the two scenarios provides one of
the keys to understanding the FCP. We will see in
section~\ref{sec:two_goals} that an analyst who fails to understand and bring in the
underlying scenarios and specific types of data when choosing the
combining methods can incorrectly
apply a combining method not designed for
the underlying scenario and consequently deliver forecasting results
worse than other methods (e.g., SA). 

For the questions raised in the second item regarding whether we can avoid  the estimation price, we cannot fully
address them without  a proper
framework, because for any sensible method, one can always find 
a situation to favor it to its competitors. The framework we consider
with sound theoretical support is through a minimax view: If one has a
specific class of combination of the forecasts in mind and wants to
target the best combination in this class, then without any
restriction/assumption on
unbiasedness of the candidate forecasts and stationarity of the
forecast errors, the minimax view seeks a clear understanding of the
minimum price we have to pay no matter what method (existing or not)
is used for combining. It turns out that the framework from the minimax
view is closely related to the forecast combination scenarios{ discussed in the first item}, and
\citet{yang2004combining} provides a detailed theoretical exposition of the
distinct forecast combination scenarios and associated minimax results. 

Indeed, \citet{yang2004combining} shows that  from a minimax perspective,
because of the aggressive target set for
the CFI scenario, we have to pay an unavoidably
heavier cost than the target set under the
CFA scenario. Specifically,  if we let $K$ denote the number
of forecasts and $T$ denote the forecasting horizon,
\citet{yang2004combining} shows that when the target is to find the
optimal weight to minimize the general empirical risk over a set
of weights satisfying a convex constraint (which is appropriate under
the CFI scenario), the estimation cost is
$O(\frac{K \log(1+T/K)}{T})$ for relatively large $T$ ($T>K^2$), and $O(\log (K)/\sqrt{T\log
  T})$ for relatively small $T$ ($T\leq K^2$).  In
contrast, if the target is to match the performance of the best
individual forecast (which is appropriate under the CFA
scenario), the estimation cost is only  $O(\log (K)/T)$. 

Because of the unavoidable heavy cost under the CFI scenario, it is not
always ideal to pursue the 
aggressive target of the optimal weight. Indeed, even if the optimal weight gives better performance than the best individual candidate, the improvement may not be enough to offset the additional estimation cost (i.e., increased variance) as precisely (in minimax rate) identified in \citet{yang2004combining} and \citet{wang2014adaptive}.  As another contribution of
our work, we show in section~\ref{sec:cocktail}
that an appropriately constructed  forecast
combination strategy can perform in a smart way according to  the
underlying CFI or CFA scenario. If CFI is the correct scenario, the proposed 
strategy can behave both aggressively and
conservatively so that it performs similar to SA when SA is much better
than e.g., the linear regression method.

Besides the estimation error and the necessary distinction of
underlying scenarios discussed in the first two items, the following three reasons
can also contribute to FCP. First, the weighting derivation formula used by complicated methods is often
not suitable for the situation. For example, under {structural} breaks, old
historical data no longer hold support for a valid optimal weighting scheme, and the
known justification of
well-established combining methods fails as a result. Indeed,
 \citet{hendry2004pooling} demonstrate that when candidate forecasting
models are all misspecified and breaks
occur in the information variables, methods that estimate the
optimal weight may not perform as well as SA. In
section~\ref{subsec:breaks}, our Monte Carlo examples also show that SA may
dominate the complicated methods when breaks occur in DGP dynamics.  Second, it is common practice that the candidate forecasts are already screened in some
ways so that they are more or less on an equal footing. For example,
\citet{stock1998comparison} and \citet{stock2004combination} apply
various model selection methods such as AIC and BIC to identify promising
linear or nonlinear candidate forecast models. Recently,
\citet{bordignon2013combining} select models of different types
(ARMAX, time-varying coefficients, etc.) and suggest that SA works
well when combining a small number of well-performing forecasts. In studies using survey data of professional forecasters, it is also
expected that each professional forecaster performs some model screening
before satisfactorily settling down with their own forecast.  In these cases,
there may not be particularly poor candidate forecasts, and the
the candidates (at least the top ones) may tend to contribute more or less equally to the optimal combination, making SA a 
competitive method.  In section~\ref{subsec:screening}, we use
Monte Carlo examples to show that screening can be a source of FCP.  Lastly, the puzzle can also be
a result of publication bias; people do not tend to emphasize the
performance of SA when SA does not work well. 

With all our understandings of FCP discussed above, we address the
issues raised in the third item and provide further information
on robustness of SA in
sections~\ref{sec:cocktail}-\ref{subsec:screening}. In particular, we
will see that  SA is actually not robust in performance in several directions: its performance may change significantly or even substantially when i) an optimal, poor or redundant forecast is added; or ii) the degree of the screening of the candidate forecasts is done differently. In addition, the size of the rolling window to deal with {structural} breaks affects the relative performance of SA as well. Fortunately, as will be seen, some combination methods can{ largely }avoid these defects.


\section{Problem Setup}
\label{sec:setup}

Suppose that an analyst is interested in forecasting a real-valued time series
$y_1,y_2,\cdots$. Given each time point $t\geq 1$, let $\mathbf x_t$
be the (possibly multivariate) information variable vector revealed
prior to the observation of $y_t$. {The }$\mathbf x_t$ may not be accessible to
the analyst. Conditional on $\mathbf x_t$ and $\mathbf z_{t-1} =:\{(\mathbf x_j,
y_j),\,1\leq j\leq t-1\}$, $y_t$ is subsequently generated from some unknown
distribution $p_t(\cdot|\mathbf x_t, \mathbf z_{t-1})$ with conditional
mean $m_t=\mathbb E(y_t|\mathbf x_t, \mathbf z_{t-1})$ and conditional
variance $v_t=\text{Var}(y_t|\mathbf x_t, \mathbf z_{t-1})$. Then,  $y_t$ can
be represented as
$y_t=m_t+\varepsilon_t$, where $\varepsilon_t$ is the random noise
with the conditional mean and the conditional variance being 0 and
$v_t$, respectively. 

Assume that prior to the observation of
$y_t$, the analyst has access to $K$ real-valued candidate forecasts $\hat
y_{t,i}$ ($i=1,\cdots,K$). These forecasts may be
constructed with different model structures, and/or with different
components of the information variables, but the details
regarding how each original forecast is created may not be
available in practice and are not {assumed} to be known. {The analyst's objective in (linear) forecast combination }is to construct a weight vector $\mathbf
w=(w_{1},\cdots,w_{K})^T\in \mathbb R^K$, based on the available information
prior to the observation of $y_t$,
 to find a point forecast
of $y_t$ by forecast combination  $\hat y_{t,\mathbf
  w}=\sum_{i=1}^K w_{i}\hat y_{t,i}$.  The weight vector may
be different at different time points.

To gauge the performance of a {procedure that produces forecasts $\{\hat{y}_t, t=1,2,\dots\}$} given 
time horizon $T$, we consider the average forecast risk
\begin{equation*}
R_T = \frac{1}{T}\sum_{t=1}^T\mathbb E(y_t-\hat y_t)^2
\end{equation*}
in our analysis and simulation studies. For real data
evaluation, since the risk cannot be computed, we use the mean square
forecast error (MSFE) as a substitute:
\begin{equation*}
\text{MSFE}{_T}=\frac{1}{T}\sum_{t=1}^T(y_t-\hat y_t)^2.
\end{equation*}

According to{ the }FCP, simple methods with little or no time variation in weight $\mathbf
w$ (e.g., equal weighting) often outperform complicated methods with much time variation in
terms of {$R_T$ and $\text{MSFE}_T$}.



\section{CFA versus CFI: A Hidden Source of FCP}
\label{sec:two_goals}

In this section, we study the performance of forecast
combination methods under the two distinct scenarios. Failure to
recognize these scenarios can itself result in{ the }FCP.  We use two simple but illustrative Monte Carlo examples under regression settings similar to{ those }of \citet{huang2010combine} to demonstrate the CFA and
CFI scenarios. 

\begin{description}
\item[Case 1.] Suppose $y_t$ ($t=1,\cdots, T$) is generated by the linear model
\begin{equation*}
y_t=x_t\beta+\varepsilon_t,
\end{equation*}  
where $x_t$'s are  $i.i.d.$ $N(0,\sigma_X^2)$, and $\varepsilon_t$'s are independent of
$x_t$'s and are  $i.i.d.$ $N(0,\sigma^2)$. Consider the
two candidate forecasts generated by 
\begin{align*}
\text{Forecast 1: }& \hat y_{t,1} = x_t\hat\beta_t;\\
\text{Forecast 2: }& \hat y_{t,2} = \hat\alpha_t,
\end{align*}
where $\hat\beta_t$ and $\hat\alpha_t$ are both obtained from the
ordinary least square (OLS) estimation using historical data.
\end{description}
Given that Forecast 1 essentially represents the true model, 
its combining with Forecast 2 cannot improve over the performance of
the best individual forecast asymptotically, thus giving an example of the CFA scenario. Let
$T_0$ be a {\it fixed} start point of the evaluation period, and
let $T$ be the end point. Given the evaluation period from $T_0$
to $T$, let $R_{T,1}$, $R_{T,2}$ and $R_{T,\mathbf w}$ be
the average forecast risks of Forecast 1, Forecast 2 and the
combined forecast, respectively. If we let $R_{T,\text{SA}}$
be the average
forecast risk at time $T$ for SA, we expect that
$R_{T,\text{SA}}>R_{T,1}$. Indeed, Proposition~\ref{prop:case1} in the Appendix shows
\begin{equation}
\label{eq:ratio1}
\frac{R_{T,1}}{R_{T,\text{SA}}} \rightarrow
\frac{\sigma^2}{\sigma^2+\beta^2\sigma_X^2/4} \quad \text{as }\,
T\rightarrow \infty,
\end{equation}
and asymptotically, the optimal combination assigns all the weight on 
Forecast 1.

Under the CFA scenario, since the best candidate is unknown,
the natural goal of forecast combination is to match the performance
of the best candidate. 

\begin{description}
\item[Case 2.] Suppose $y_t$ ($t=1,\cdots, T$) is generated by the
linear model
\begin{equation*}
y_t=\left(x_{t,1}+x_{t,2}\right)\beta+\varepsilon_t,
\end{equation*}
where the $\mathbf x_{t}=(x_{t,1},x_{t,2})^T$ {are }$i.i.d.${ following a }bivariate normal distribution with mean $\mathbf 0$  and common variance $\sigma^2_X = \sigma^2_{X_1} = \sigma^2_{X_2}$. Let $\rho$ denote the correlation between $x_{t,1}$ and $x_{t,2}$. The random error  $\varepsilon_t$'s are independent of
$\mathbf x_t$'s and are  $i.i.d.$ $N(0,\sigma^2)$. Consider the two
candidate forecasts generated by
\begin{align*}
\text{Forecast 1: }& \hat y_{t,1} = x_{t,1}\hat\beta_{t,1};\\
\text{Forecast 2: }& \hat y_{t,2} = x_{t,2}\hat\beta_{t,2},
\end{align*}
where $\hat\beta_{t,1}$ and $\hat\beta_{t,2}$ are both obtained from
OLS estimation with historical data.
\end{description}
Different from Case 1, Case 2  presents a scenario where each candidate forecast employs{ only }part
of the information set. It is expected, to some extent, that combining the two
forecasts works like pooling different sources of important information, resulting in performance
better than either of the candidate forecasts. By defining the average forecast
risks $R_{T,1}, R_{T,2}, R_{T,\text{SA}}$ the same
way as in Case 1, we can see from Proposition~\ref{prop:case2} in the Appendix that
\begin{equation}
\label{eq:ratio2}
\frac{R_{T,1}}{R_{T,\text{SA}}} \rightarrow
\frac{\sigma_X^2\beta^2(1-\rho^2)+\sigma^2}{\sigma_X^2\beta^2(1-\rho^2)(1-\rho)/2+\sigma^2} \quad \text{as }\,
T\rightarrow \infty.
\end{equation}
Clearly, when the two information sets are not highly correlated, SA can improve the forecast performance over
the best candidate. This case gives a typical example of the
CFI scenario, and it is appropriate to seek the more aggressive
goal of finding the best linear combination of candidate forecasts. 
\par
{Our view is that }discussion of{ the }FCP should
take into account the{ different combining }scenarios.  Next, we perform Monte Carlo studies on the two cases to provide an explanation of the
puzzle. Combining methods suitable for the CFA scenario have
been developed to target performance of the best individual
candidate. In our numerical studies, we choose{ the }AFTER method \citep{yang2004combining} as the
representative, and it is known that AFTER pays a smaller estimation price than
methods that target the optimal linear or convex weighting.  In contrast, combining
methods for the CFI scenario usually attempt to
estimate the optimal weight. We choose linear regression{ of the response on the candidate forecasts }(LinReg) as the representative. The method of \citet{bates1969combination} without estimating correlation (BG for brevity) is used as{ an additional }benchmark. 

For Case 1, we perform simulations as follows. Set $\sigma^2=\sigma_X^2=1$. Consider a sequence of 20 $\beta$'s
such that the corresponding  signal-to-noise (S/N) ratios are evenly spaced between 0.05
and 5 in the logarithmic scale. For each $\beta$, we conduct the following
simulation 100 times to {estimate }the average forecast risk. A sample of
100 observations is generated. The first 60 observations are used
to build the candidate forecast models, which are subsequently used to generate forecasts for
the remaining 40 observations. Forecast
combination methods including SA, BG, AFTER and LinReg methods are applied to
combine the candidate forecasts, and the last 20 observations are used for
performance evaluation. The average forecast risk of each forecast combination method
is divided by that of SA to obtain the normalized average forecast
risk (denoted by normalized $R_T$). The
results are summarized in Figure~\ref{fig:toy1}.
\begin{figure}[!ht]
\centering
\includegraphics[scale=1]{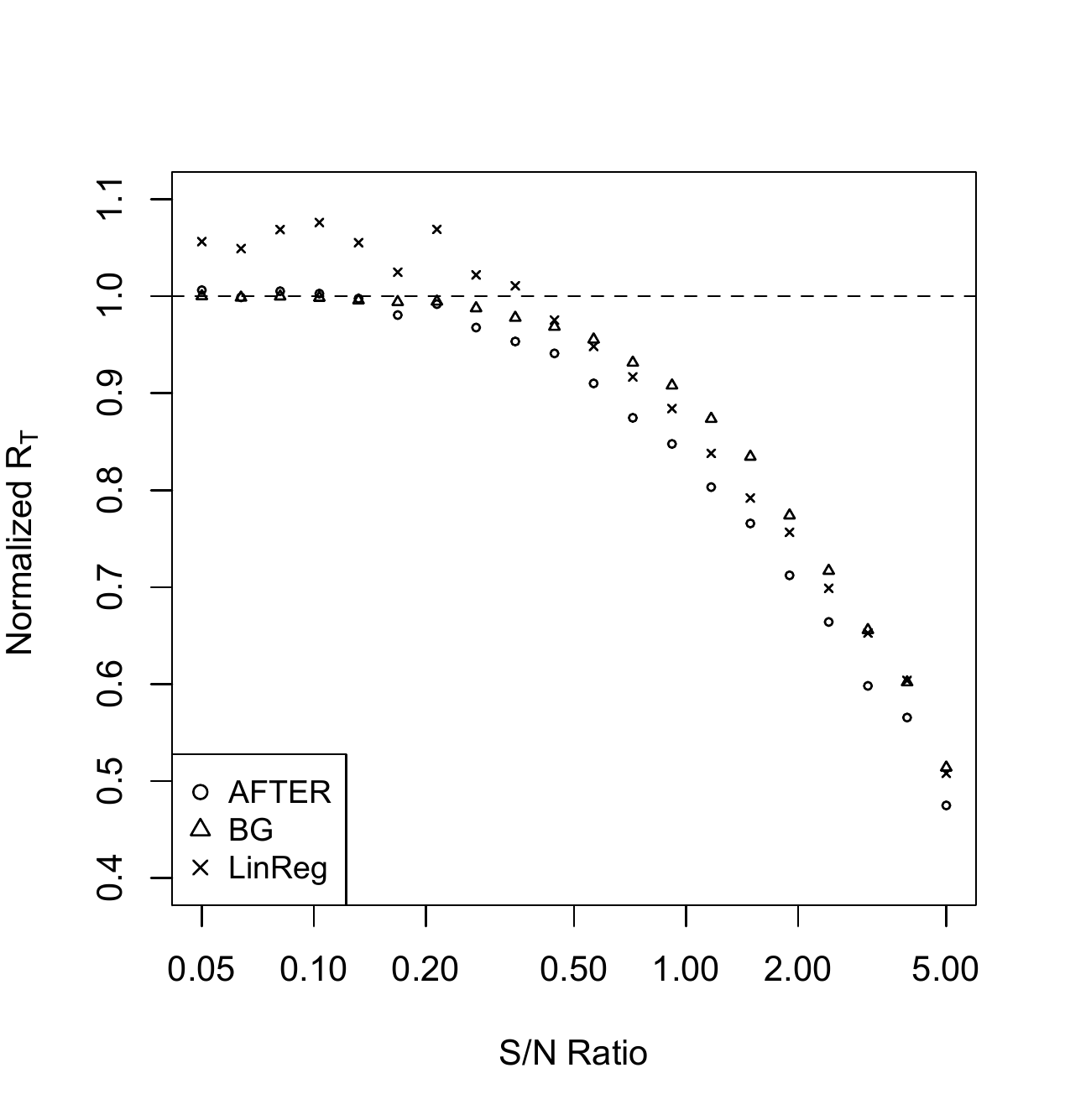}
\caption{(Case 1) Comparing the average forecast risk of different forecast
  combination methods (dashed line represents the SA baseline; 
  $x$-axis is in logarithmic scale).}
\label{fig:toy1}
\end{figure}
For Case
2, we set $\beta=\beta_1=\beta_2$, $\rho=0$ and
$\sigma^2=\sigma_{X_1}^2=\sigma_{X_2}^2=1$. The remaining simulation settings
are the same as Case 1. The normalized average forecast risks (relative to
SA) are summarized in Figure~\ref{fig:toy2}.

\begin{figure}[!ht]
\centering
\includegraphics[scale=1]{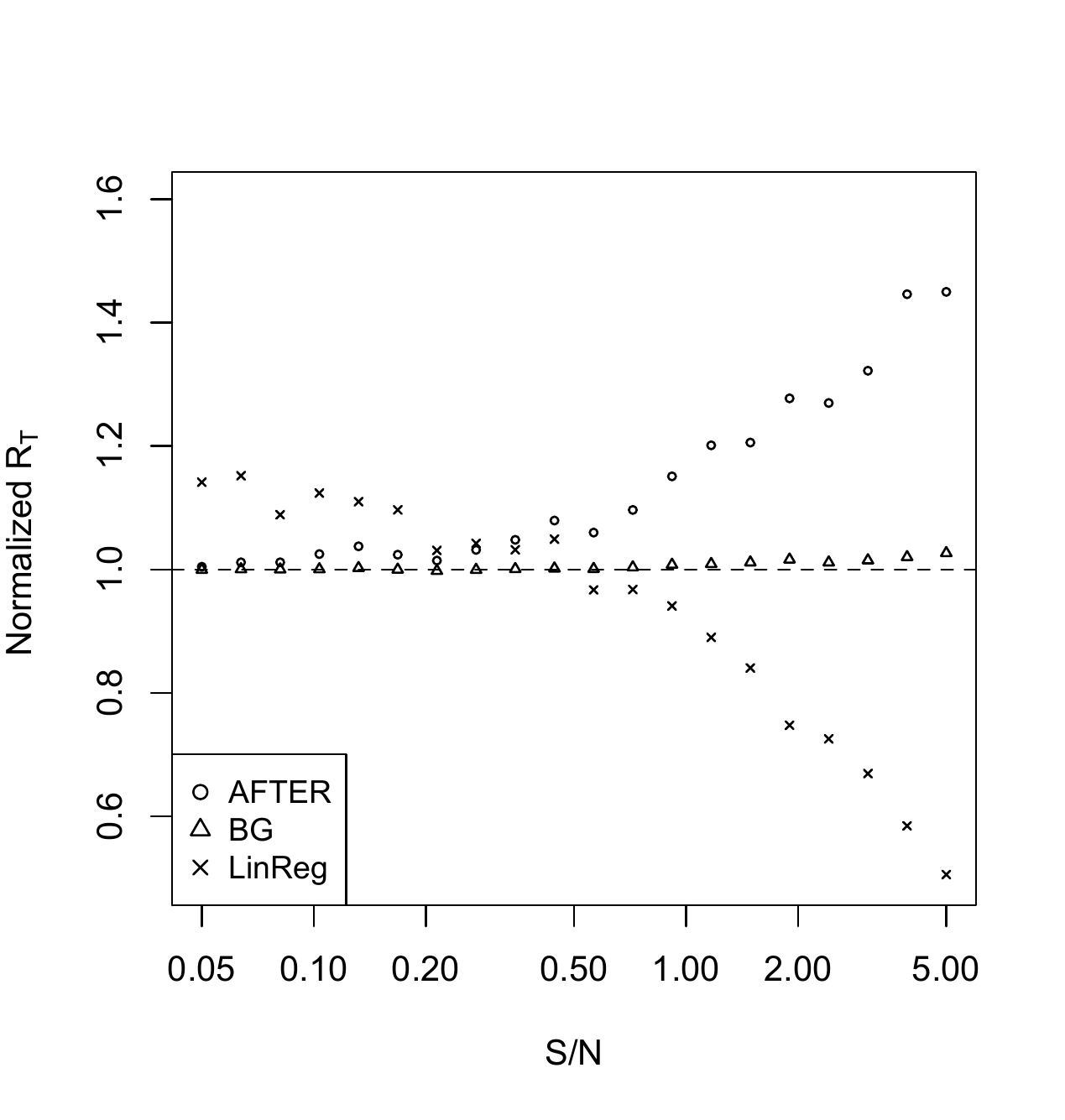}
\caption{(Case 2) Comparing the average forecast risk of different forecast
  combination methods (dashed line represents the SA baseline; 
  $x$-axis is in logarithmic scale).}
\label{fig:toy2}
\end{figure}

In Case 1, it is clear from Figure~\ref{fig:toy1} that AFTER is the preferred
method of choice under the CFA scenario. LinReg, on the
other hand, consistently underperforms compared to
AFTER. Interestingly, when S/N is relatively low (less than 0.35), we observe the
``puzzle'' that LinReg 
performs worse than SA, which is due to the weight
estimation error. If the analyst correctly identifies that it is the
CFA scenario and applies a corresponding method like AFTER,
the ``puzzle'' disappears: AFTER can perform better than
(or very close to) SA, 
while LinReg fails.

In Case 2, if the analyst applies AFTER
without realizing the underlying CFI scenario, we observe the
``puzzle'' that SA outperforms AFTER. The ``puzzle'' is
not entirely surprising since AFTER is designed to target the
performance of the best individual
forecast, while \eqref{eq:ratio2} shows that SA can improve over the
best individual forecast. LinReg appears to be the correct
method of choice when S/N ratio is relatively high. However, similar
to what is observed in Case 1,
LinReg suffers from weight estimation error when S/N ratio
is low,  once again giving the ``puzzle'' that LinReg performs worse than
SA.

Case 2 also shows the interesting observation that it is not
always optimal to apply SA even when SA is the ``optimal'' weight in a
restricted sense. Indeed, \eqref{eq:weight_improve} and
\eqref{eq:weight_improve1} in Proposition~\ref{prop:case2} imply
that if we adopt the
common restriction that the sum of  all weights is 1, SA is
the asymptotic optimal weight. However, if we impose no restriction on
the weight range, the asymptotic optimal weight assigns a unit weight to
each candidate forecast. This explains the advantage of LinReg over SA in Case 2 when the S/N ratio is large.

The observations above illustrate that different combining methods can
have strikingly different performance depending on the underlying
scenario. The FCP can appear when a combining method is not properly
chosen according to the correct scenario. Without knowing the underlying scenario, comparing these
methods may not provide a complete picture of FCP, and blindly
applying SA may result in sub-optimal performance. We advocate the
practice of trying to  identify the underlying scenario  (CFA or CFI) when considering forecast combination. It should be pointed out
that when the relevant information is limited, it may not be feasible
to confidently identify the forecast combination scenario. In such a
case, a forced selection, similar to the comparison of model selection
and model combining (averaging) described in \citet{yuan2005combining}, would induce enlarged variability of the resulting forecast. A better solution is an adaptive combination of forecasts as {illustrated }in the next section.  

\section{Multi-level AFTER}
\label{sec:cocktail0}
With the understanding in section~\ref{sec:two_goals},{ we see that }when considering forecast combination methods, an effort{ should }be made to understand whether there is much room for improvement over the best candidate. When this is difficult to decide or impractical to implement due to handling a large number of quantities to be forecast in real time, we may turn to the question: Can we find an {\it adaptive} (or {\it universal}) combining strategy that 
performs well in both CFA and CFI scenarios? Note that here {\it
  adaptive} refers to adaptation to the forecast combination scenario
(instead of adaptation to achieving the best individual
performance).  Another question {follows}:  Under the CFI scenario, can the adaptive combining strategy still perform as well as SA when the price of estimation error
is high? As we have seen in Case 2 of section~\ref{sec:two_goals}, using methods (e.g., LinReg)  intended for the CFI scenario alone cannot successfully address the second question.

It turns out that the answers to these two questions are
affirmative. The idea is related to a philosophical
comment in \citet{clemen1995screening}: \\
{\it ``Any combination of forecasts yields a single forecast. As a
  result, a particular combination of a given set of forecasts can
  itself be thought of as a forecasting method that could compete...''}\\
The use of combination of forecast (or procedure) combinations is a theoretically powerful tool to achieve adaptive minimax optimality (see, e.g., \citet{yang2004combining}, \citet{wang2014adaptive}). In the context of our discussion, combined forecasts such as SA, AFTER and LinReg can
all be considered as the candidate forecasts and may be used as individual candidates in a forecast combination scheme. 

Accordingly, we design a two-step
combining strategy: first, we construct three new candidate forecasts using  SA, AFTER and
LinReg; second, we apply the AFTER algorithm on
these new candidate forecasts to generate a combined forecast. We refer to this two-step algorithm as multi-level
AFTER (or mAFTER for short) because two layers of AFTER algorithms are
involved. The key lies in the 
AFTER algorithm on the second step, which allows mAFTER to
automatically target the
performance of the best individual candidate among SA, AFTER and LinReg. Under 
the CFA scenario, mAFTER can perform as if we are using AFTER alone
considering that AFTER is the proper method of choice. Under the CFI scenario, mAFTER can perform closely
to the better of SA and LinReg. Thus,  when LinReg suffers from severe estimation error, mAFTER will perform closely to SA and thereby avoid the high cost.

Indeed, if we denote the forecasts generated from SA, LinReg
and mAFTER by $\hat y_t^{({SA})}$, $\hat y_t^{({LR})}$ and $\hat
y_t^{({M})}$, respectively, we have Proposition~\ref{cor:cocktail} as follows.
\begin{prop}
\label{cor:cocktail}
Under the regularity conditions shown in the Appendix, the
average forecast risk of the mAFTER strategy satisfies
\begin{align*}
\frac{1}{T}\sum_{t=T_0}^T\mathbb E(y_{t}-\hat y_{t}^{(M)})^2
\leq & \inf\Bigl(\inf_{1\leq i\leq K}\frac{1}{T}\sum_{t=T_0}^T\mathbb
E(y_{t}-\hat y_{t,i})^2 +\frac{c_1\log(K)}{T},\,
\frac{1}{T}\sum_{t=T_0}^T\mathbb E(y_{t}-\hat
y_t^{({SA})})^2+\frac{c_2}{T},\,\\
&\qquad \frac{1}{T}\sum_{t=T_0}^T\mathbb E(y_{t}-\hat
y_t^{({LR})})^2+\frac{c_2}{T}\Bigr),
\end{align*}
where $c_1$ and $c_2$ are some positive constants not depending on the
time horizon $T$.
\end{prop}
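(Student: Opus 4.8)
The plan is to obtain the result as a two-fold application of the oracle inequality that AFTER already satisfies, established in \citet{yang2004combining}. The property I would invoke is that, under the stated regularity conditions, whenever AFTER is run on a collection of $m$ forecast procedures producing $\{\hat y_{t,j}\}_{j=1}^m$, its combined output $\hat y_t$ obeys
\[
\frac{1}{T}\sum_{t=T_0}^T\mathbb E(y_t-\hat y_t)^2\le \inf_{1\le j\le m}\frac{1}{T}\sum_{t=T_0}^T\mathbb E(y_t-\hat y_{t,j})^2+\frac{c_0\log m}{T},
\]
with $c_0$ independent of $T$. The point that makes this usable here is that the AFTER guarantee is agnostic to how the candidates are produced: it only requires each candidate to be predictable with respect to the information available before time $t$. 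Hence SA, the first-level AFTER forecast, and LinReg are all legitimate candidates for a second-level AFTER.

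First I would apply the inequality at the first level, where AFTER combines the $K$ original candidates $\hat y_{t,i}$; denoting the resulting forecast by $\hat y_t^{(A)}$, this gives
\[
\frac{1}{T}\sum_{t=T_0}^T\mathbb E(y_t-\hat y_t^{(A)})^2\le \inf_{1\le i\le K}\frac{1}{T}\sum_{t=T_0}^T\mathbb E(y_t-\hat y_{t,i})^2+\frac{c_0\log K}{T}.
\]
Next I would apply the inequality at the second level, where mAFTER combines $\hat y_t^{(SA)}$, $\hat y_t^{(A)}$, and $\hat y_t^{(LR)}$. With $m=3$ this produces three simultaneous bounds: the average risk of $\hat y_t^{(M)}$ is at most each of the average risks of $\hat y_t^{(SA)}$, $\hat y_t^{(A)}$, and $\hat y_t^{(LR)}$, each inflated by $c_0\log 3/T$. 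Setting $c_2=c_0\log 3$ immediately delivers the second and third components of the claimed infimum.

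For the first component, I would chain the two displays: substituting the first-level oracle bound for $\hat y_t^{(A)}$ into the second-level bound against $\hat y_t^{(A)}$ yields the best individual candidate risk plus an extra error term $c_0(\log K+\log 3)/T$. Since $K\ge 2$ forces $\log K\ge\log 2$, the $c_0\log 3/T$ contribution is absorbed by enlarging the constant to $c_1=c_0(1+\log 3/\log 2)$, giving the best-individual component. Taking the minimum of the three bounds then produces the stated infimum.

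The main obstacle I anticipate is not the chaining, which is routine, but justifying that the AFTER oracle inequality applies at the second level, where one of the candidates is itself the random, data-driven first-level AFTER forecast. The argument rests on the fact that AFTER's guarantee holds conditionally for arbitrary predictable candidate sequences, so the randomness of the inner AFTER does not invalidate the outer bound; I would make this predictability requirement explicit and check that the regularity conditions in the Appendix (tail or moment control on $\varepsilon_t$ together with boundedness of the forecast errors) are inherited by the three second-level candidates, in particular that the variance-estimation step inside the outer AFTER remains well behaved for these constructed candidates.
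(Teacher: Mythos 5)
Your proposal is correct and follows essentially the same route as the paper, which proves the proposition simply by citing Theorem 5 of \citet{yang2004combining} (the AFTER oracle inequality) applied at both levels; your two-fold application with the chaining step and the constants $c_2=c_0\log 3$, $c_1=c_0(1+\log 3/\log 2)$ is exactly the argument the paper leaves implicit. Your closing caveat about verifying that the second-level candidates inherit the regularity conditions (in particular boundedness for the LinReg forecast, which may require the truncation the Appendix mentions) is a point the paper glosses over, and is worth making explicit.
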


Proposition~\ref{cor:cocktail} is a consequence of Theorem 5 in
\citet{yang2004combining}. It indicates that, in
terms of the average forecast risk, mAFTER can match
the performance of the best original individual forecast, the SA
forecast and the LinReg forecast (whichever is the best), with a relatively small price of order{ at most }$\log(K)/T$.

To confirm that the mAFTER strategy can solve the ``puzzles'' 
illustrated in the previous section, we repeat the simulation studies
of Case 1 and Case 2 and summarize the results
in Figure~\ref{fig:toy_cocktail1} and Figure~\ref{fig:toy_cocktail2},
respectively. In Case 1, it suffices to see that mAFTER correctly tracks the performance of AFTER. In Case 2, when S/N is
relatively large ($>0.5$), mAFTER takes advantage of the opportunity
to improve over the original individual forecasts and performs very
closely to LinReg; when S/N is relatively small ($< 0.5$), mAFTER
behaves very similarly to SA and successfully avoids the heavy
estimation error suffered by LinReg. Therefore, rather than relying on
SA, a ``sophisticated'' combining strategy like mAFTER can be an appealingly
safe method that avoids FCP.

Note that mAFTER is a rather general forecast combination
strategy. In the first step of the strategy, the analyst can choose
their own way of generating new candidate forecasts (not necessarily
restricted to AFTER and LinReg), as long as they
include SA, representative methods for the CFA scenario, and representative
methods for
the CFI scenario. AFTER and LinReg are simply chosen in our study
as convenient representatives.  We also demonstrate the performance of
the mAFTER strategy in the real
data example in section~\ref{sec:real_data}.

\begin{figure}[!ht]
\centering
\includegraphics[scale=1]{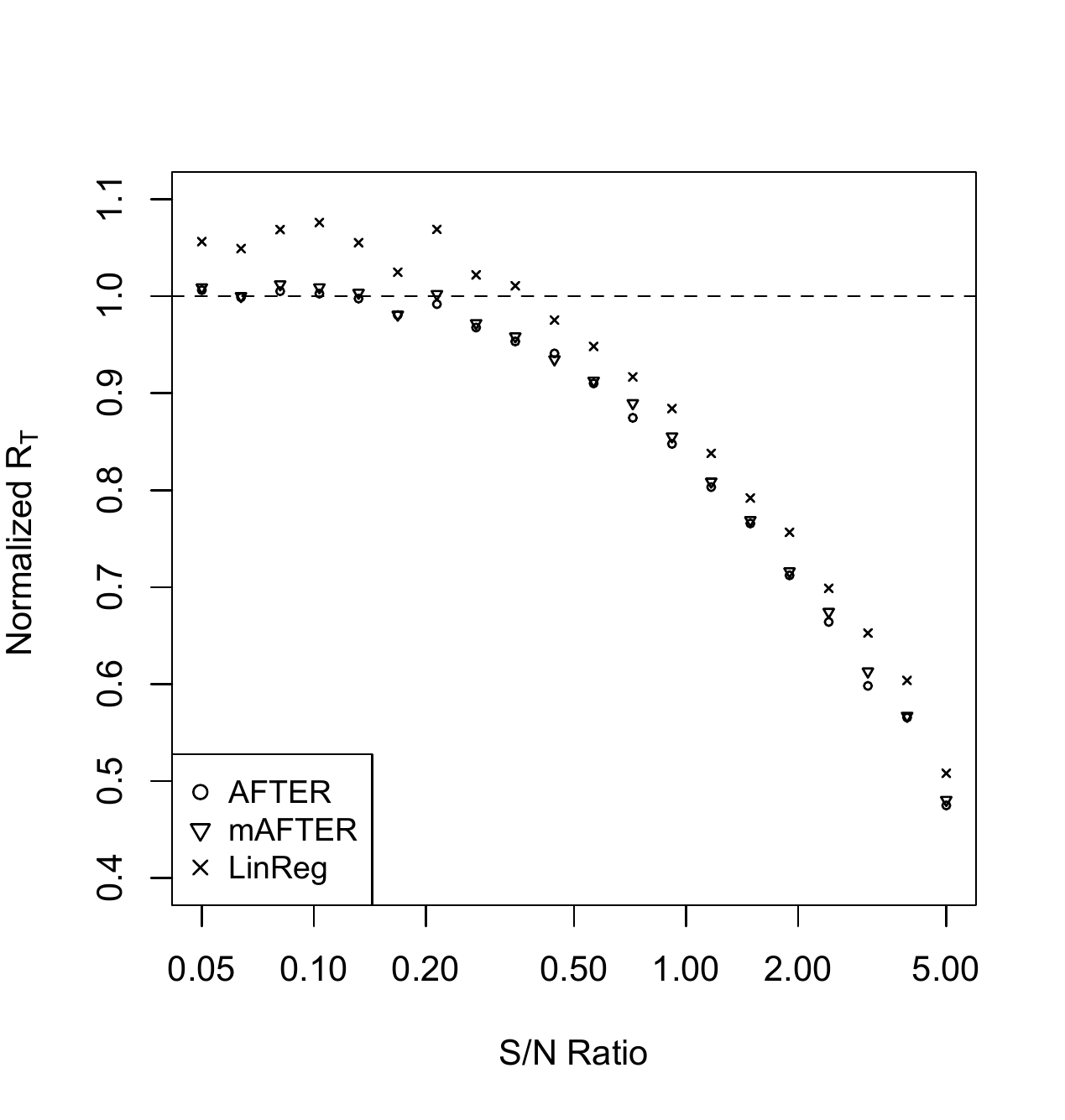}
\caption{(Case 1) Performance of mAFTER under adaptation scenario (dashed line represents the SA baseline; 
  $x$-axis is in logarithmic scale).}
\label{fig:toy_cocktail1}
\end{figure}

\begin{figure}[!ht]
\centering
\includegraphics[scale=1]{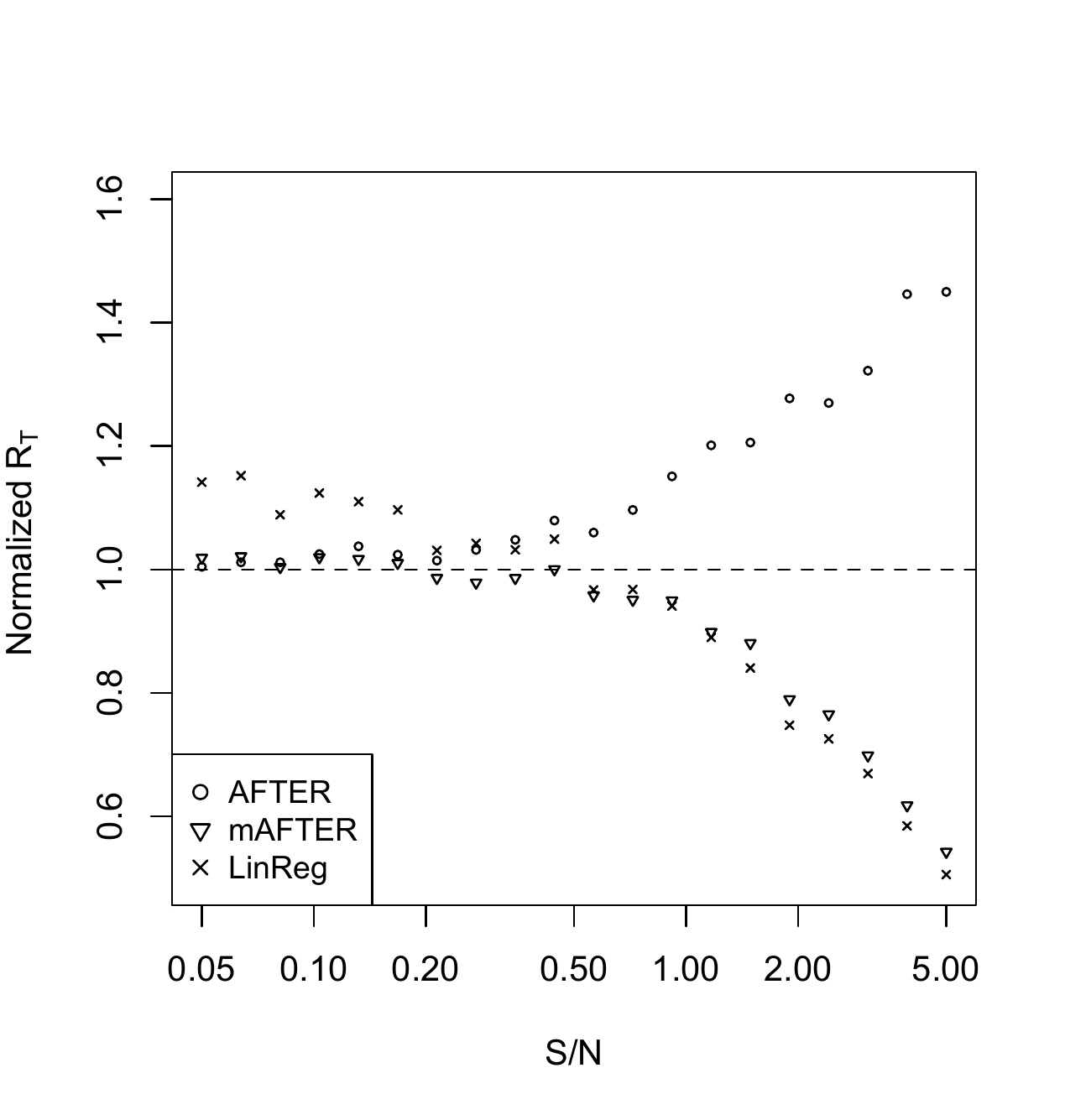}
\caption{(Case 2) Performance of mAFTER under improvement scenario (dashed line represents the SA baseline; 
  $x$-axis is in logarithmic scale).}
\label{fig:toy_cocktail2}
\end{figure}

\section{Is SA Really Robust?}
\label{sec:cocktail}

The SA has been praised for being robustly among top performers {relative to other }forecast combination methods. It is obvious that SA cannot be robust in the traditional statistical sense: even a single really bad candidate can damage the performance of the combined forecast to an arbitrarily worse position. 
A more interesting question is to assess robustness of SA in practically relevant settings.

The previous two sections have shown that SA is not robust  in terms of its relative performance when dealing with the
two different scenarios. In this section, we show that SA is not robust even in the loose sense when 
 new forecast candidates are
added to the candidate pool, especially if the new candidates have
only redundant information with respect to the original candidate pool. In contrast, the AFTER-type combining methods can be
rather robust against adding poor or redundant candidate
forecasts. Here, we
consider the
following three cases. 

\begin{description}
\item[Case 3.] Suppose a new information variable $x_{t,3}$ has the
  same distribution as $x_{t,1}$, and is
  independent of $\mathbf z_{t-1}$ and $(x_{t,1},x_{t,2})$. A new candidate forecast 
$\hat y_{t,3} = x_{t,3}\hat\beta_{t,3}$
joins the candidate pool in Case 2, where $\hat\beta_{t,3}$ is
obtained from OLS estimation with historical data.

\item[Case 4.] A new candidate forecast 
$\hat y_{t,3}=x_{t,2}\hat\beta_{t,2}$ identical to Forecast 2 
joins the candidate pool in Case 2. 

\item[Case 5.] A new candidate forecast $\hat y_{t,3}=\tilde x_{t,2}\tilde\beta_{t,2}$ is generated using a
  transformed information variable $\tilde
  x_{t,2}=\exp(x_{t,2})$, where $\tilde\beta_{t,2}$ is obtained from
OLS estimation with historical data.

\end{description}

Note that the new candidate in Case 3 is a very poor forecast, while
the new candidates in Case 4 and Case 5 contain a subset of the
information variables. In all {of }the cases above, no new information is added to
the candidate pool. Following the same simulation setting as Case 2,
we focus on SA and AFTER and compute the ratio between the MSFE after adding the new candidate and
the MSFE in Case 2. Figure~\ref{fig:toy3_5} shows
that the performance of AFTER remains almost the same, while the performance of
SA worsens after adding the {non-informative or }redundant candidate forecasts.

\begin{figure}[!ht]
\centering
\includegraphics[scale=.8]{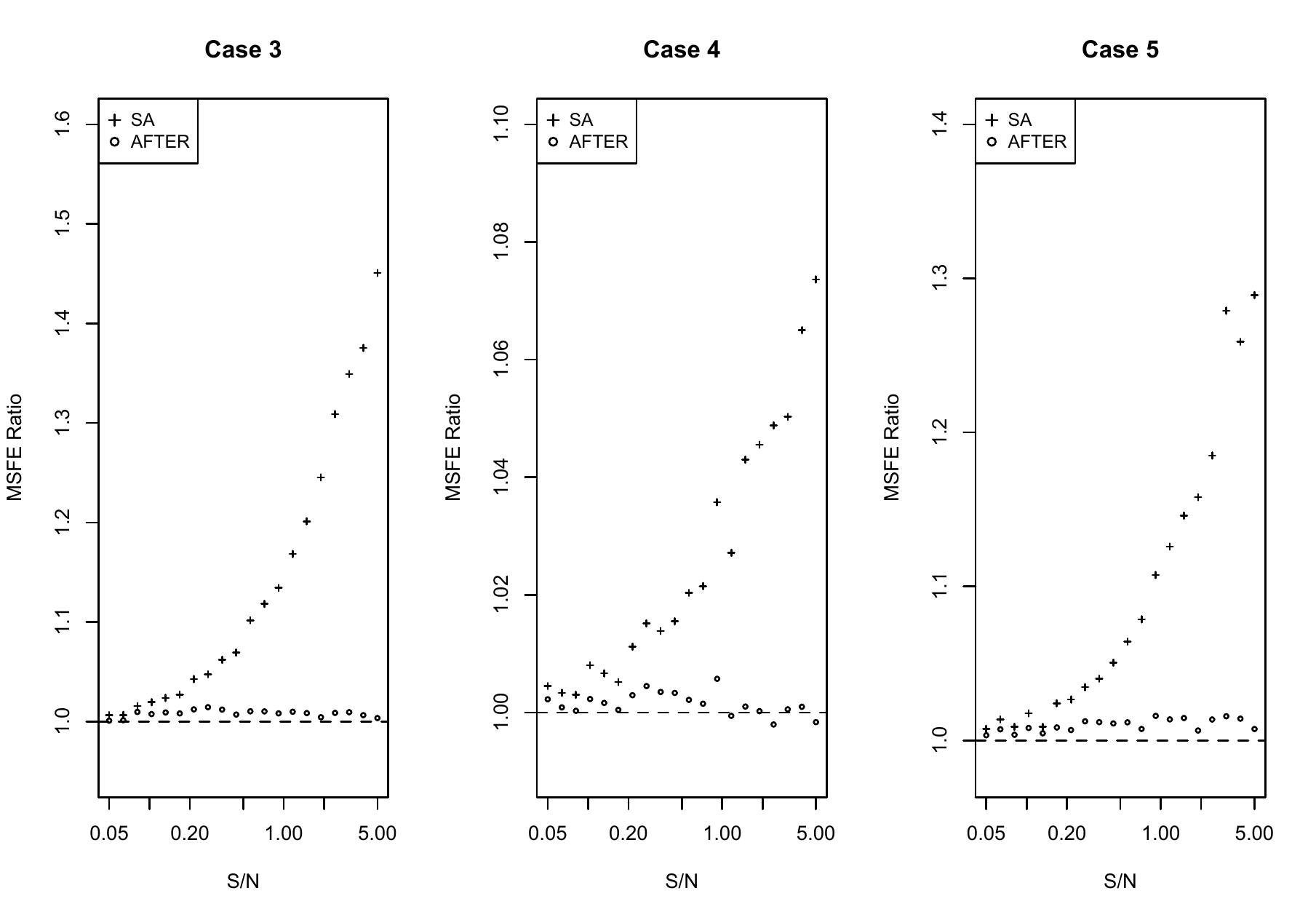}
\caption{Studying the robustness of SA against adding new
  candidate forecasts.}
\label{fig:toy3_5}
\end{figure}

\section{Improper Weighting Formulas: A Source of the FCP Revisited}
\label{subsec:breaks}

Generally speaking, the popular forecast combination methods often implicitly
assume that the time series and/or the forecast errors are stationary.  It is expected in theory
that they should perform well if we have access to long enough
historical data. In practice, however, such derived weighting formulas can often be
unsuitable when the DGP changes and the
candidate forecasts cannot adjust quickly to the {new }reality. For example,
it is often believed that {structural} breaks can
unexpectedly happen, making the relative performance of {the }candidate forecasts unstable and giving us the impression that SA
performs well. 

Next, we use a Monte Carlo example to illustrate the FCP under
{structural} breaks. Rather than assuming deterministic shift{s} in
information variable{s} \citep{hendry2004pooling}, we consider breaks in the
DGP dynamics:
\begin{equation*}
y_t=\begin{cases}
\sum_{k=1}^4\beta_{1,k}y_{t-k} + \varepsilon_t &\mbox{if }  1\leq t\leq 50, \\
 \beta_{2,1}y_{t-1}+\beta_{2,2}y_{t-2} + \varepsilon_t &\mbox{if }  51\leq t\leq 100, \\
\beta_{3,1}y_{t-1} + \varepsilon_t &\mbox{if }  101\leq t\leq 150, \\
\end{cases}
\end{equation*}
where the coefficients $\beta_{j,k}$ ($j=1,2,3$) are randomly
generated from the uniform distribution on $(0,1)$, and
$\varepsilon_t$'s are $i.i.d.$ $N(0,1)$. Here, {structural} breaks happen at $t=50$ and
$t=100$. The candidate forecast models are autoregression{s}
from lag 1 to lag 6, and we apply SA, BG, LinReg and AFTER to generate
the combined forecasts. The
simulation is repeated 100 times, and the last 100 time points serve as the
evaluation period to obtain the average forecast
risk. For comparison, we consider BG, LinReg and AFTER methods
with estimation rolling window size $rw=20$ or 40, meaning only the most recent $rw$ observations are used to estimate the weights for each forecast. The results are summarized in
Table~\ref{tab:breaks}. The average forecast risk is normalized with
respect to SA, and numbers in parentheses are standard errors.

\begin{table}[!ht]
\caption{Comparing the normalized average forecast risk of different combination methods
  under structural breaks.}
\begin{center}
\begin{tabular}{lcccc}
\toprule
& SA & LinReg &  BG & AFTER  \\
\midrule
standard & 1.000 & 1.026 (0.011) & 1.005 (0.003) & 1.047 (0.010)   \\
$rw=40$  & 1.000 &1.060 (0.033) & 0.992 (0.002) & 0.991 (0.009)   \\
$rw=20$  & 1.000 &1.64 (0.42) & 0.980 (0.003) & 0.952 (0.007)   \\
\bottomrule
\end{tabular}
\end{center}
\label{tab:breaks}
\end{table}

We can see from Table~\ref{tab:breaks} that all three standard combining
methods, when finding weights using all historical data, underperform
compared to SA due to the unstable relative performance of candidate
forecasts. As we shrink the estimation window size to
the most recent 40 and 20 time points, BG and AFTER achieve
better performance than SA while the performance of LinReg
worsens. This result can be understood by noting that there are two
opposing factors when we shrink the weight estimation window. When
using only the most recent forecasts, we decrease the bias of the weighting
formula supported by the old data but simultaneously increase the
variance of the estimated weight. Among the three methods considered,
the estimation error factor dominates for LinReg. On the other hand, AFTER is not designed to aggressively target the
optimal weight, thus benefiting the most from the shrinking rolling
window. 

Due to the complex impact of {structural} breaks on forecast
combination methods, it is arguably true
that the focus should be made on how to
detect the problem (see, e.g., \citealp{altissimo2003strong};
\citealp{davis2006structural}) and how to come up with new
 combining forms accordingly (e.g., using the
most recent observations to avoid an improper weighting formula). However, proper identification of structural breaks can be difficult to achieve in practice, and this example shows that in the presence of {structural} breaks, the relative performance of SA is not {as robust as }BG and AFTER with {na\"{\i}vely chosen }rolling windows.

\section{Linking Forecast Model Screening to FCP}
\label{subsec:screening}

In empirical studies, the candidate forecasting models are often
screened/selected in some way to generate a smaller set of candidates
for combining. As is demonstrated in Case 3 of section~\ref{sec:cocktail}, the performance of SA is particularly susceptible
to poor-performing candidate models.  The common practice of model
screening may contribute to improving the performance of SA.

Next, we illustrate the impact of screening with a Monte Carlo
example. Let $\mathbf x_t\in \real^{p}$ ($p=20$) be the $p$-dimensional information variable vector
randomly generated from {a }multivariate normal distribution with mean
$\mathbf 0$ and covariance $\Sigma$, where
$(\Sigma)_{i,j}=\rho^{\abs{i-j}}$ and $\rho=0$ or 0.5. Consider a DGP with linear model setting 
\begin{equation*}
y_t=\mathbf x_t^T\boldsymbol{\beta}+\varepsilon_t,
\end{equation*}
where coefficient  $\boldsymbol{\beta}=(3,3,2,1,1,1,1,0,0,\cdots,0)$ and
$\varepsilon_t$ are $i.i.d.$ $N(0,\sigma^2)$ with  $\sigma=2$ or 4.  Under this setting, only the
first 7 variables in $\mathbf x_t$ are important for $y_t$, while the
remaining variables are redundant. 

If we assume that the analyst has full access to the information
vector $\mathbf x_t$'s, we may build
linear models as the
candidate forecasts with any subset of the information variables. It is known from \citet{wang2014adaptive} that if we select the best subset model
with the right size using the ABC criterion \citep{yang1999model} or
combine the subset regression models by proper adaptive combining
methods \citep{yang2001adaptive}, the prediction risk can adaptively
achieve the minimax optimality over soft and hard sparse function
classes. Inspired by this result, we consider the following screening-and-combining
approach. First, given the model size (that is, the number of information
variables used in a candidate linear model), choose the best 
OLS model based on estimation mean square error. Second, from the $p$
models selected from the first step, find the top $X$\% ($X=10,20,40,60,80$) of the models
based on the ABC criterion. Note that the ABC criterion for a subset model with
size $r$ is
$ABC(r) =\sum_{t=1}^n(y_t-\hat y_{t,r})^2+2r\sigma^2+\sigma^2 \log {p \choose
r}$, where $n$ is the estimation sample size, $\hat y_{t,r}$ is
the fitted response, and $\sigma^2$ can be replaced by the estimation
mean square error.  The remaining subset models after the
two-step screening are used to
build the candidate forecasts for combining. In simulation, the total
time horizon is set to be 200. The screening procedures are applied to
the first 100 observations, and the remaining models are used to build the candidate forecasts for
the latter 100 time points. Different forecast combination methods are
applied, and their performances are evaluated
using the last 50 observations. The simulation is repeated
100 times, and the normalized average forecast risk (relative to SA) is
summarized in Table~\ref{tab:screening}.

\begin{table}[!ht]
\caption{Comparing the normalized average forecast risk of different
  forecast combination methods after the screening procedure.}
\begin{center}
\begin{tabular}{lccccc}
\toprule
Top $X$\% & 10\% & 20\% &  40\% & 60\% & 80\%  \\
\midrule
\multicolumn{6}{c}{$\sigma=2$, $\rho=0$}\\
\midrule
AFTER & 0.998 & 0.989 & 0.966 & 0.951 & 0.945  \\
BG  & 1.000 & 0.999 & 0.997 & 0.997 & 0.996   \\
LinReg  & 1.017 &1.024 & 1.056 & 1.098 & 1.151   \\
\midrule
\multicolumn{6}{c}{$\sigma=2$, $\rho=0.5$}\\
\midrule
AFTER & 0.996 & 0.990 & 0.968 & 0.956 & 0.951  \\
BG  & 1.000 & 0.998 & 0.997 & 0.997 & 0.996   \\
LinReg  & 1.013 &1.024 & 1.043 & 1.095 & 1.159   \\
\midrule
\multicolumn{6}{c}{$\sigma=4$, $\rho=0.5$}\\
\midrule
AFTER & 0.994 & 0.987 & 0.984 & 0.981 & 0.974  \\
BG  & 0.999 & 0.998 & 0.998 & 0.998 & 0.997   \\
LinReg  & 1.002 & 1.012 & 1.056 & 1.101 & 1.163   \\
\midrule
\multicolumn{6}{c}{$\sigma=4$, $\rho=0.5$}\\
\midrule
AFTER & 0.995 & 0.990 & 0.976 & 0.969 & 0.961  \\
BG  & 1.000 & 0.999 & 0.998 & 0.997 & 0.997   \\
LinReg  & 1.004 & 1.010 & 1.030 & 1.086 & 1.136   \\
\bottomrule
\end{tabular}
\end{center}
\label{tab:screening}
\end{table}

Table~\ref{tab:screening} shows that AFTER 
outperforms all the other competitors, including SA. This is
consistent with our understanding of a typical CFA scenario,
under which AFTER is the proper choice of combining methods. However, as we decrease $X$ and select smaller sets of candidate forecasts for
combining, the performance of SA gradually approaches that of AFTER.
Such a result is not entirely surprising considering that when only the top few
models are selected, simply averaging them can perform similarly to
the optimal results obtained by
the proper subset selection or combination methods 
\citep{wang2014adaptive}. LinReg, which is not a proper choice under
the CFA scenario, appears to underperform
compared to SA. 
As $X$ decreases, LinReg becomes less subject to weight estimation error, and
the performance of LinReg improves relative to SA. 

From this example, we can see that the performance of SA is not
robust to {the }degree of screening. Generally, it is a very challenging
task to ensure an optimal screening to make SA perform well.  As a
result, although SA works relatively well in this particular example
for aggressive screening (keeping very few candidates), SA should not
be preferred in general. Without a good screening/selection rule, it leaves too much
freedom for the analyst to make poor decisions. We note that a
possible solution is to first create
new candidate forecasts (e.g., forecasts generated by linear
regression methods) to utilize most or all of the important
information, and then the
roles of a good screening/selection rule can be played by applying
the multi-level AFTER approach (introduced in section~\ref{sec:cocktail0}) on both
the original forecasts and the combined forecasts to reduce the
influence of the poor-performing or redundant forecasts.

\section{Real Data Example}
\label{sec:real_data}

In this section, we study the U.S. SPF (Society of Professional Forecasters) dataset to evaluate SA and the mAFTER strategy. This dataset is a quarterly survey on macroeconomic
forecasts in the United States. \citet{lahiri2013machine} nicely
handled the missing forecasts by adopting two missing forecast
imputation strategies known as the regression imputation (REG-Imputed) and the simple
average imputation (SA-Imputed) to generate the complete panels. 
As pointed out by \citet{lahiri2013machine}, the change of data
administration agency in 1990 and the subsequently shifting missing
data pattern make it difficult to use the entire data period
for meaningful evaluation. Therefore, we inherit their missing
forecast imputation as well as the forecast selection strategies, and focus on the period from 
1968:Q4 to 1990:Q4 to evaluate the
performance of the mAFTER strategy. 

Three macroeconomic variables
are considered: {seasonally-adjusted }annual rate of change for GDP price deflator (PGDP),
growth rate of real GDP (RGDP) and quarterly
average of monthly unemployment rate (UNEMP). The datasets for RGDP
and PGDP have 14 candidate forecasts, and the datasets for UNEMP have
13 candidate forecasts. Each forecast provides
$g$-quarter ($g=1,2,3,4$) ahead forecasting. We apply SA, AFTER, BG,
LinReg and mAFTER to each SPF dataset of a 
macroeconomic variable with a given missing forecast imputation method. Each forecast combination
method uses the first
one fourth of the total time horizon to build up the initial weights, and the
remaining time points are used to calculate the normalized MSFE of
each method relative to SA. By taking the average over the four MSFEs
that correspond to the 1,2,3,4-quarter ahead forecasting, we summarize
the performance of
different combining methods in Table~\ref{tab:real_data1}.

\begin{table}[!ht]
\caption{Comparing the performance of forecast combination methods with
SPF datasets (values shown are normalized MSFEs averaged
over 1,2,3,4-quarter ahead forecasting).}
\begin{center}
\begin{tabular}{lccccc}
\toprule
Target Variable & SA & LinReg &  BG & AFTER & mAFTER \\
\midrule
\multicolumn{6}{c}{REG-imputed}\\
\midrule
PGDP & 1.00 & 1.88 & 0.95 & 0.90  & 0.90 \\
RGDP & 1.00 &1.64 & 1.00 & 1.11  & 1.01 \\
UNEMP & 1.00 & 1.79 & 0.99 & 0.98  & 0.98 \\
\midrule
\multicolumn{6}{c}{SA-imputed}\\
\midrule
PGDP & 1.00 & 2.17 & 0.98 & 0.95  & 0.95 \\
RGDP & 1.00 &1.83 & 1.00 & 1.13  & 1.03 \\
UNEMP & 1.00 & 1.69 & 0.99 & 0.97  & 0.98 \\
\bottomrule
\end{tabular}
\end{center}
\label{tab:real_data1}
\end{table}

From Table~\ref{tab:real_data1}, although AFTER performs quite differently with different target
macroeconomic variables, the mAFTER strategy delivers overall
robust performance for all three variables. For PGDP, AFTER performs the
best\deleted{,} and beats SA by as much as 10\%. Using mAFTER successfully maintains this
advantage over SA.  For RGDP, while SA and BG beat
AFTER by up to 13\%, mAFTER successfully pulls the
performance to within 3\% of SA. Finally, for the UNEMP variable, SA, BG and AFTER all
perform very similarly with no more than a 3\% difference, and the
performance of mAFTER does not deviate much from either SA or
AFTER. The LinReg method that aggressively pursues the optimal
weight performs poorly for all three target variables. It is
interesting to note from Figure~\ref{fig:real_data3} that for both PGDP and RGDP variables, the
largest performance difference between SA and AFTER {is found in }the one-quarter ahead forecasting; in each case, mAFTER
robustly matches the better of SA and AFTER. 

 \begin{figure}[!ht]
\centering
\hspace{-.4in}
\subfigure{
\includegraphics[scale=.57]{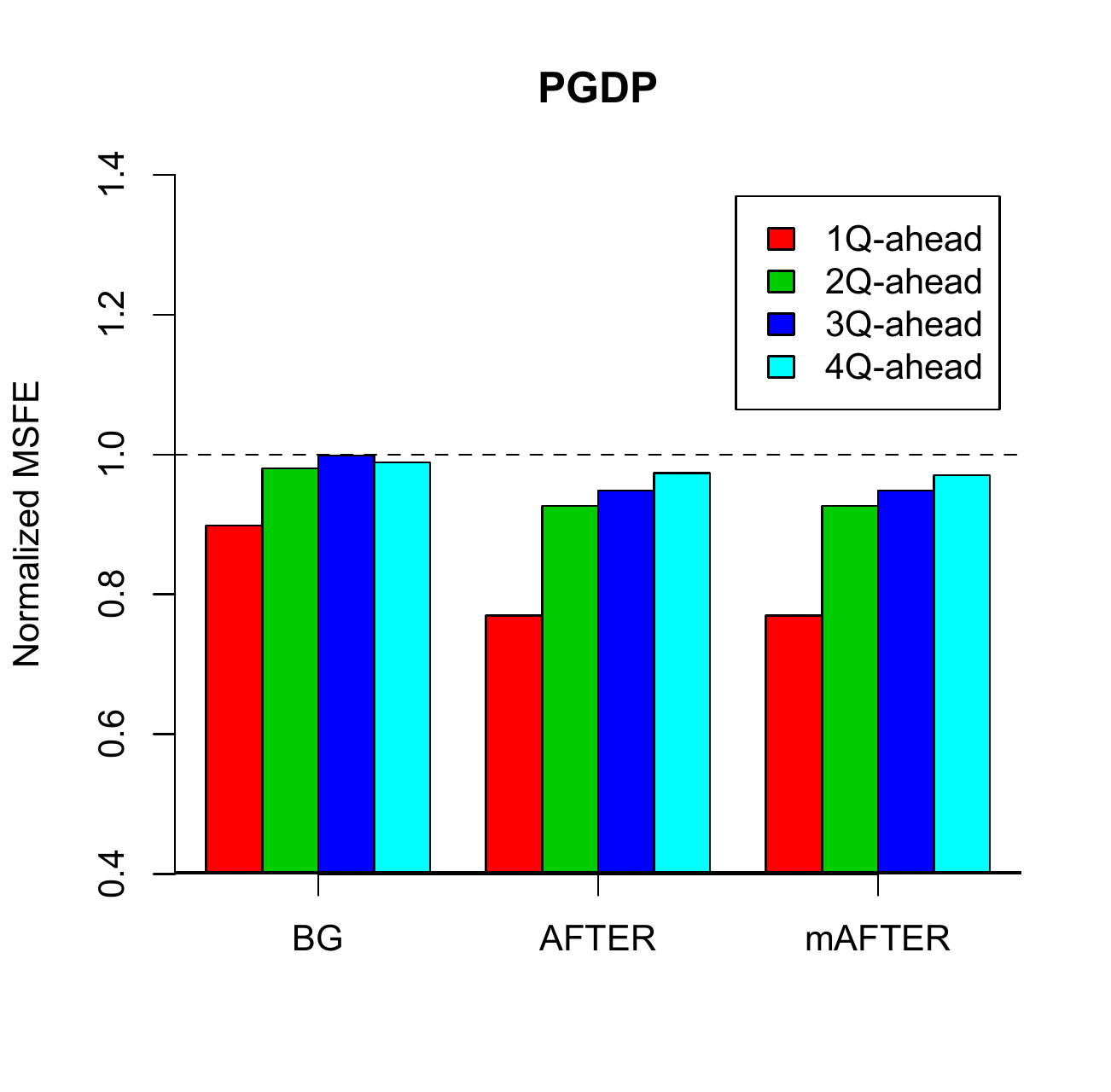}
}
\hspace{-.2in}
\subfigure{
\includegraphics[scale=.57]{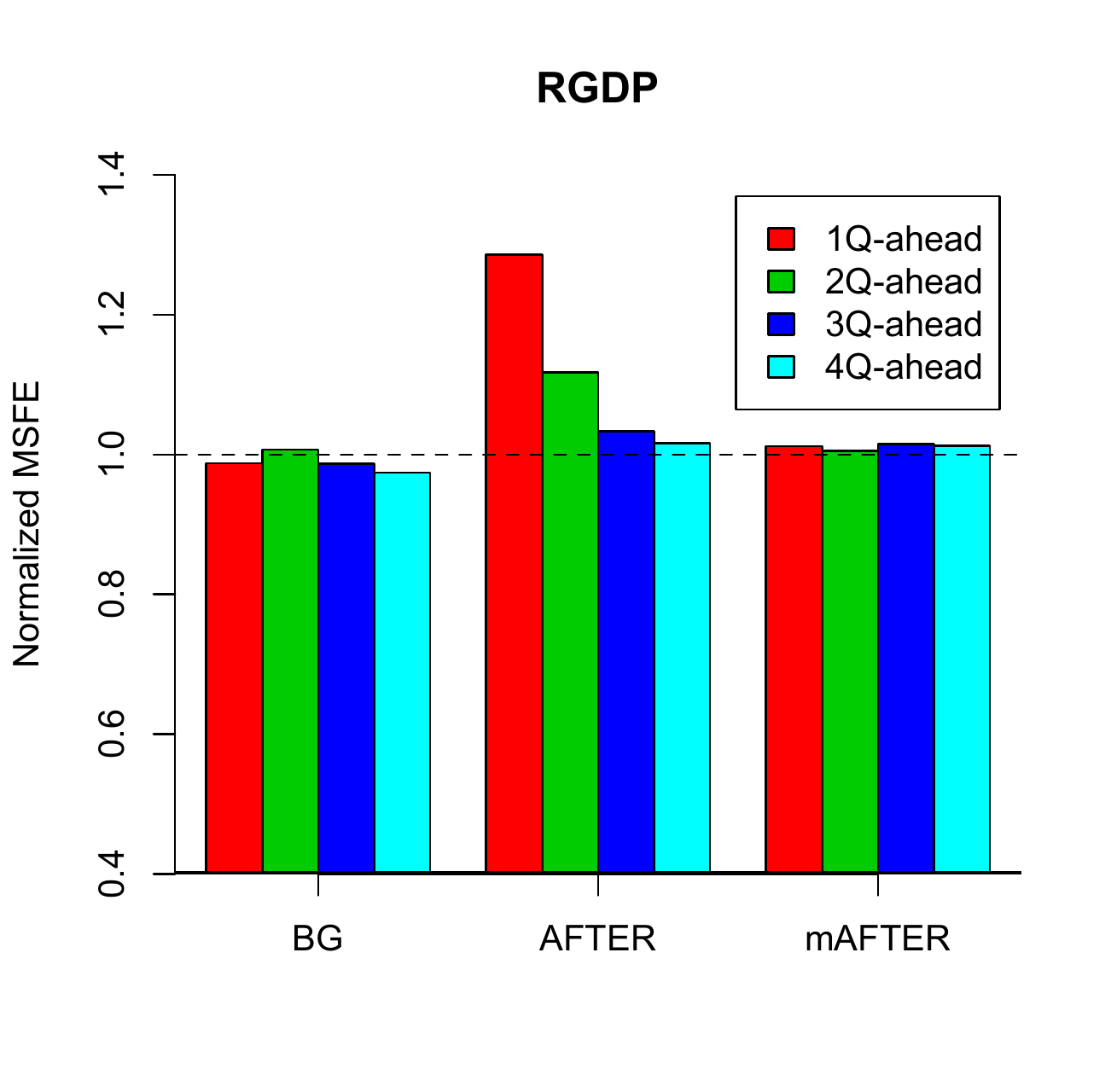}
}
\caption{Comparing normalized MSFEs of different forecast combination
  methods with REG-Imputed SPF datasets. Left panel: PGDP variable. Right
  panel: RGDP variable. For each method, the bars
  from left to right represents 1,2,3,4-quarter ahead forecasting
  results, respectively. The dashed line represents the SA baseline. }
\label{fig:real_data3}
\end{figure}

\section{Conclusions}
\label{sec:conclusion}

Inspired by the seemingly mysterious FCP, we
provide our explanations of why the puzzle often occurs and
investigate when a sophisticated combining method can work well compared to
the simple average (SA). Our study illustrates that the following
reasons can contribute to the puzzle. 

First,
estimation error is known to be an important source of FCP. Both theoretical and empirical evidence show that {a }relatively small sample size may prevent some combining
methods from reliably estimating the optimal weight. Second, FCP can
appear if we apply a combining method without
consideration of the underlying data scenarios. The relative performance of SA may depend heavily on which scenario is more proper for the data. Third, the weighting formula of the combining methods is not always appropriate for the data, because {structural} breaks and shocks can unexpectedly happen. The weighting formula obtained by
sophisticated methods cannot adjust fast enough to the reality,
resulting in performance less stable  than SA.  Fourth, candidate
forecasts are often screened in some way so that the remaining forecasts
used for combining tend to have similar performance, and SA may tend to work well
in such cases. However, SA can be sensitive to the screening process, {and }enlarging the pool of candidates may benefit other combination methods; therefore, empirical observations that SA works well {after model screening }should be taken with a grain of salt.  Fifth, there may be publication bias in that people tend to report the existence of FCP when SA gives good empirical
results but may not emphasize the performance of SA when it gives mediocre results.

Regarding the first two reasons above, our study shows that it is not hard
to find data and build candidate forecasts in a
certain way to favor a sophisticated or simple method. Under
the CFA scenario, we realize that the heavy estimation
price can be avoided by applying combining methods designed to target the
performance of the best candidate forecast. Under the CFI
scenario, although past literature has properly 
pointed out the potentially high cost of estimation error when
targeting the optimal weight, it turns out that we do not have to pay the high cost. Indeed, a carefully designed mAFTER strategy
can perform aggressively to target the optimal weight when information is sufficient to support exploiting the optimal weighting and perform conservatively like SA when {the degree of }estimation error is high. mAFTER can also intelligently perform according to the underlying
scenario (CFA or CFI), avoiding the puzzle caused by improperly choosing the combining methods.  

SA certainly can be the best or among the top combining methods, as observed empirically and reported in the literature. It may be particularly useful when one
can legitimately narrow the focus to just a few well-behaving candidate forecasts. However, since the uncertainty of the process used to
reach the small set of candidates is not reflected in the showcase examples in the literature, the ``conditional" results in favor of SA may not be replicable when one starts from scratch with inhomogeneous raw models/forecasts. For such problems, the performance of SA may span {the }whole spectrum, from terrible to on top of the chart. Also, when information is rich for a stable forecasting problem, SA may lose greatly to a model-based method (e.g., regression). In contrast, when the analyst  has little confidence {in }basic modeling assumptions on the data {or in }the quality of the available forecasts, perhaps SA (or the like) would be the choice to take.    

The repeatedly reported puzzle in literature
tends to give the sentiment that sophisticated methods are not
trustworthy and simple methods should be used. Based on our
understanding and the numerical results, it seems fair to say that if the
sophisticated methods in those studies do not perform well, it is actually because
they are not sophisticated enough, not the other way around! In
particular, when SA is considered by mAFTER as a candidate, the
possible advantage of SA is retained while the un-robustness of SA is
avoided. To a large extent, {the forecast combination puzzle }no longer exists if we are able to
move forward intelligently
by integrating the strengths of different combining methods.

\renewcommand{\theequation}{A.\arabic{equation}}
\renewcommand{\thesubsection}{\Alph{subsection}}
\renewcommand{\theassumption}{A.\arabic{assumption}}
\setcounter{equation}{0}  
  
\section*{APPENDIX}

\subsection{Assumptions of Proposition~\ref{cor:cocktail}}
\label{app:regularity}
The following two assumptions are sufficient regularity conditions for
Propostion~\ref{cor:cocktail}. Note that Assumption~\ref{ass:bound} is
satisfied if we truncate the candidate forecasts to have certain lower and
upper bounds. Assumption~\ref{ass:moments} is satisfied if the
conditional distributions of the random noise are
sub-Gaussian. 

\begin{assumption}
\label{ass:bound}
There exists a positive constant $M$ such that the candidate
forecasts satisfy with probability 1 that
\begin{equation*}
\sup_{1\leq i\leq K, 1\leq t\leq T} \abs{m_t-\hat y_{t,i}} \leq M.
\end{equation*}
\end{assumption}

\begin{assumption}
\label{ass:moments}
There exist\added{s} a constant $r_0>0$ and continuous functions
$0<h_1(r),h_2(r)<\infty$ on $[-r_0,r_0]$ such that for every $1\leq
t\leq T$ and $r\in [-r_0,r_0]$,
\begin{align*}
\mathbb E\bigl(\abs{\varepsilon_t}^2\exp(r
\abs{\varepsilon_t})|\,\mathbf x_t, \mathbf z_{t-1}\bigr)&\leq h_1(r),\\
\mathbb E\bigl(\exp(r
\abs{\varepsilon_t})|\,\mathbf x_t, \mathbf z_{t-1}\bigr)& \leq h_2(r)
\end{align*}
with probability 1.

\end{assumption}

\subsection{Propositions and Proofs}

\begin{prop}
\label{prop:case1}
Under the settings of Case 1, the average forecast risk of Forecaster 1 relative to the SA satisfies
\begin{equation*}
\frac{R_{T,1}}{R_{T,\text{SA}}} \rightarrow
\frac{\sigma^2}{\sigma^2+\beta^2\sigma_X^2/4} \quad \text{as }\,
T\rightarrow \infty.
\end{equation*}
In addition, if we consider the weight vectors in $\real^2$, the asymptotic optimal combination
weight $w^*$ satisfies
\begin{equation*}
w^*=:\arg\min_{\mathbf w\in\real^2} \left(\lim_{T\rightarrow\infty}
R_{T,\mathbf w}\right) = \left({1\atop 0}\right).
\end{equation*}
\end{prop}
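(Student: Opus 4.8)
The plan is to reduce everything to the per-period squared-error risk of a fixed-weight combination $\hat y_{t,\mathbf w}=w_1\hat y_{t,1}+w_2\hat y_{t,2}$ with $\mathbf w=(w_1,w_2)^T$, since Forecast~1 is the case $\mathbf w=(1,0)^T$ and SA is the case $\mathbf w=(1/2,1/2)^T$. Writing the two OLS rules computed from $\mathbf z_{t-1}$ as $\hat\beta_t=\sum_{j<t}x_jy_j/\sum_{j<t}x_j^2$ and $\hat\alpha_t=\frac{1}{t-1}\sum_{j<t}y_j$, the structural fact I rely on is that $(x_t,\varepsilon_t)$ is independent of $\mathbf z_{t-1}$, hence of $(\hat\beta_t,\hat\alpha_t)$, and that $\mathbb E(\hat\beta_t-\beta)=0$. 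Substituting $\hat\beta_t=\beta+(\hat\beta_t-\beta)$ I would decompose the error as
\[
y_t-\hat y_{t,\mathbf w}=(1-w_1)\beta x_t+\varepsilon_t-w_1 x_t(\hat\beta_t-\beta)-w_2\hat\alpha_t ,
\]
and then check that every cross term vanishes in expectation: those with a standalone $x_t$ or $\varepsilon_t$ factor are zero because $x_t,\varepsilon_t$ are mean-zero and independent of the past, and the lone $x_t^2$-term is killed by $\mathbb E(\hat\beta_t-\beta)=0$. This leaves the clean additive form
\[
\mathbb E(y_t-\hat y_{t,\mathbf w})^2=(1-w_1)^2\beta^2\sigma_X^2+\sigma^2+w_1^2\sigma_X^2\,\mathbb E(\hat\beta_t-\beta)^2+w_2^2\,\mathbb E\hat\alpha_t^2 .
\]

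Next I would show the two estimation-variance terms are $O(1/t)$. Conditioning on $x_1,\dots,x_{t-1}$ gives $\mathbb E[(\hat\beta_t-\beta)^2\mid x_{1:t-1}]=\sigma^2/\sum_{j<t}x_j^2$; since $\sum_{j<t}x_j^2\sim\sigma_X^2\chi^2_{t-1}$ and $\mathbb E[1/\chi^2_k]=1/(k-2)$, this yields $\mathbb E(\hat\beta_t-\beta)^2=\sigma^2/[\sigma_X^2(t-3)]$ for $t\ge 4$. Likewise $\mathbb E\hat\alpha_t^2=\mathrm{Var}(\hat\alpha_t)=(\beta^2\sigma_X^2+\sigma^2)/(t-1)$ because the $y_j$ are i.i.d.\ with mean zero. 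Both tend to $0$, so their Ces\`aro averages over $t\in[T_0,T]$ also tend to $0$, and passing to the limit gives $\lim_{T\to\infty}R_{T,\mathbf w}=(1-w_1)^2\beta^2\sigma_X^2+\sigma^2$. Evaluating at $\mathbf w=(1,0)^T$ and $\mathbf w=(1/2,1/2)^T$ returns $\sigma^2$ and $\beta^2\sigma_X^2/4+\sigma^2$, whose ratio is exactly \eqref{eq:ratio1}.

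For the optimal-weight claim I would again use the additive form: at every finite $T$, $R_{T,\mathbf w}$ is a strictly convex quadratic in $(w_1,w_2)$ with no cross term, so its unique minimizer sets $w_2=0$ (the coefficient of $w_2^2$ is strictly positive) and $w_1=\beta^2\sigma_X^2/(\beta^2\sigma_X^2+b_T)$ with $b_T=\sigma_X^2\cdot\overline{\mathbb E(\hat\beta_t-\beta)^2}\to 0$; hence the finite-horizon optimizer converges to $(1,0)^T$. The main obstacle here is conceptual rather than computational: because $\hat\alpha_t\to 0$, the limiting objective $(1-w_1)^2\beta^2\sigma_X^2+\sigma^2$ does not depend on $w_2$ at all, so a literal reading of $\arg\min_{\mathbf w}(\lim_T R_{T,\mathbf w})$ returns the whole line $\{(1,w_2)^T\}$ and fails to single out $(1,0)^T$. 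Resolving this cleanly is precisely why I minimize at finite $T$ first—where the positive $w_2^2$-coefficient forces $w_2=0$—and only afterward let $T\to\infty$, rather than interchanging the two operations. The only remaining point needing care is invoking $\mathbb E[1/\chi^2_k]=1/(k-2)$ together with the mild requirement $t\ge 4$, which merely shifts the start of the evaluation window and leaves the limit unaffected.
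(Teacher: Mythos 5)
Your proof is correct and follows essentially the same route as the paper's (which is only sketched for Case 2 and declared ``similar'' for Case 1): expand the pointwise squared forecast error, use independence of $(x_t,\varepsilon_t)$ from the past to annihilate the cross terms, compute the $O(1/t)$ estimation-variance terms exactly, and pass to the Ces\`aro limit. Your additional observation that the limiting risk is flat in $w_2$ --- so that the stated $\arg\min$ of the limit is literally a line $\{(1,w_2)^T\}$ and one should minimize at finite $T$ before letting $T\to\infty$ to single out $(1,0)^T$ --- is a genuine refinement of a point the paper glosses over.
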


\begin{prop}
\label{prop:case2}
Under the settings of Case 2, if we assume that $\beta=\beta_1=\beta_2$
and $\sigma_X=\sigma_{X_1}=\sigma_{X_2}$, the average forecast risk of
Forecast $i$ ($i=1,2$) relative to the SA satisfies
\begin{equation}
\label{eq:ratio22}
\frac{R_{T,i}}{R_{T,\text{SA}}} \rightarrow
\frac{\sigma_X^2\beta^2(1-\rho^2)+\sigma^2}{\sigma_X^2\beta^2(1-\rho^2)(1-\rho)/2+\sigma^2} \quad \text{as }\,
T\rightarrow \infty.
\end{equation}
In addition, if we further assume  $\rho=0$, the asymptotic optimal combination
weight $\tilde w^*$ under the restriction $\Theta=\{\mathbf w:
w_1+w_2=1\}$ satisfies
\begin{equation}
\label{eq:weight_improve}
\tilde w^*=:\arg\min_{\mathbf w\in\Theta} \left(\lim_{T\rightarrow\infty}
R_{T,\mathbf w}\right) = \left({1/2\atop 1/2}\right),
\end{equation}
and the asymptotic optimal combination weight $w^*$ without the
restriction satisfies
\begin{equation}
\label{eq:weight_improve1}
 w^*=:\arg\min_{\mathbf w\in\real^2} \left(\lim_{T\rightarrow\infty}
R_{T,\mathbf w}\right) = \left({1 \atop 1}\right),
\end{equation}

\end{prop}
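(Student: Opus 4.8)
The plan is to reduce the proposition to two ingredients: the limiting value of the OLS slopes $\hat\beta_{t,1},\hat\beta_{t,2}$ as the estimation sample grows, and a Ces\`{a}ro argument converting convergence of the per-step risk $\mathbb E(y_t-\hat y_{t,\cdot})^2$ into convergence of the averaged risk $R_{T,\cdot}$ (with $T_0$ fixed, the finitely many initial terms are negligible after dividing by $T$, so $R_{T,\cdot}$ inherits the pointwise-in-$t$ limit). First I would identify the slope limits. Writing $\hat\beta_{t,1}=\bigl(\sum_{j<t}x_{j,1}y_j\bigr)\big/\bigl(\sum_{j<t}x_{j,1}^2\bigr)$ and inserting $y_j=(x_{j,1}+x_{j,2})\beta+\varepsilon_j$, the strong law gives $\tfrac1t\sum x_{j,1}^2\to\sigma_X^2$, $\tfrac1t\sum x_{j,1}x_{j,2}\to\rho\sigma_X^2$ and $\tfrac1t\sum x_{j,1}\varepsilon_j\to0$, whence $\hat\beta_{t,1}\to\beta(1+\rho)=:b$ and, by symmetry, $\hat\beta_{t,2}\to b$. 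It is cleaner to record the conditional law: given the regressors, $\hat\beta_{t,1}$ is Gaussian with mean $\beta\bigl(1+\sum x_{j,1}x_{j,2}/\sum x_{j,1}^2\bigr)$ and variance $\sigma^2/\sum x_{j,1}^2$, both of which converge ($\to b$ and $\to0$), yielding the $L^2$ convergence $\hat\beta_{t,1}\to b$ that the risk computation actually requires.

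Next I would compute the per-step risk. Because $(x_{t,1},x_{t,2},\varepsilon_t)$ is independent of $\mathbf z_{t-1}$, and hence of $\hat\beta_{t,1}$, conditioning on $\mathbf z_{t-1}$ and using $\mathbb E x_{t,1}^2=\mathbb E x_{t,2}^2=\sigma_X^2$, $\mathbb E x_{t,1}x_{t,2}=\rho\sigma_X^2$ gives
\[
\mathbb E\bigl[(y_t-x_{t,1}\hat\beta_{t,1})^2\mid\mathbf z_{t-1}\bigr]
=\sigma_X^2(\beta-\hat\beta_{t,1})^2+2\rho\sigma_X^2\beta(\beta-\hat\beta_{t,1})+\beta^2\sigma_X^2+\sigma^2 .
\]
Taking expectations and letting $t\to\infty$, the $L^2$ limit $\hat\beta_{t,1}\to b$ replaces $\mathbb E(\beta-\hat\beta_{t,1})^2$ by $\beta^2\rho^2$ and $\mathbb E(\beta-\hat\beta_{t,1})$ by $-\beta\rho$, so the per-step risk tends to $\sigma_X^2\beta^2(1-\rho^2)+\sigma^2$, which is therefore $\lim_T R_{T,1}$; symmetry gives the same value for $R_{T,2}$. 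Repeating the conditional computation for the SA forecast $\tfrac12(x_{t,1}\hat\beta_{t,1}+x_{t,2}\hat\beta_{t,2})$, whose error coefficient on each regressor tends to $\beta-b/2=\beta(1-\rho)/2$, and using $\mathbb E(x_{t,1}+x_{t,2})^2=2\sigma_X^2(1+\rho)$, yields $\lim_T R_{T,\mathrm{SA}}=\sigma_X^2\beta^2(1-\rho^2)(1-\rho)/2+\sigma^2$. Dividing the two limits gives exactly \eqref{eq:ratio22}.

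For the weight statements I would specialize to $\rho=0$, so $b=\beta$. Replacing both slopes by their $L^2$ limit, the combined error is $\beta(1-w_1)x_{t,1}+\beta(1-w_2)x_{t,2}+\varepsilon_t$, and by independence of $x_{t,1},x_{t,2}$ the limiting combined risk is
\[
\lim_{T\to\infty}R_{T,\mathbf w}=\sigma_X^2\beta^2\bigl[(1-w_1)^2+(1-w_2)^2\bigr]+\sigma^2 .
\]
Since the proposition minimizes this limit directly, no uniformity in $\mathbf w$ is needed. On $\Theta=\{w_1+w_2=1\}$, substituting $w_2=1-w_1$ reduces the bracket to $(1-w_1)^2+w_1^2$, minimized at $w_1=\tfrac12$, giving $\tilde w^*=(1/2,1/2)^T$ as in \eqref{eq:weight_improve}; over all of $\real^2$ the bracket attains its minimum value $0$ at $w_1=w_2=1$, giving $w^*=(1,1)^T$ as in \eqref{eq:weight_improve1}.

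The one genuinely delicate step, and the place I would spend the most care, is the interchange of limit and expectation underlying $\mathbb E(\beta-\hat\beta_{t,1})^2\to\beta^2\rho^2$: the slope is a ratio of sums whose denominator $\sum_{j<t}x_{j,1}^2$ can be small, so almost-sure convergence alone does not force convergence of the second moment. I would establish uniform integrability of $\{\hat\beta_{t,1}^2\}$ by splitting on the event $A_t=\{\sum_{j<t}x_{j,1}^2\ge(t-1)\sigma_X^2/2\}$, whose complement has exponentially small probability by a Gaussian concentration (large-deviation) bound; on $A_t$ the conditional-variance-plus-squared-mean expression is uniformly controlled, while the contribution off $A_t$ is driven to zero using the small probability together with the finite moments of the Gaussian numerator. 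With this in hand the moment convergence follows, and the remaining steps are the routine algebra sketched above.
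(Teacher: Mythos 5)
Your proposal is correct and follows essentially the same route as the paper: compute the pointwise (per-step) forecast risk of each forecast and of the combination, pass to the limit as the estimation sample grows, and use the Ces\`{a}ro observation (the paper's ``$r_{T,i}/R_{T,i}\to 1$'') to transfer the pointwise limit to the averaged risk $R_T$. The only organizational difference is in the bookkeeping: the paper first derives exact finite-$T$ expressions for $r_{T+1,1}$, $r_{T+1,2}$ and $r_{T+1,\mathbf w}$ in terms of moments such as $\mathbb E\bigl(\hat\rho\,\hat\sigma_{X_2}/\hat\sigma_{X_1}\bigr)$ and $\sigma^2/(T-2)$ (exploiting normality, e.g.\ the inverse chi-square moment), and then lets these converge, whereas you go directly to the $L^2$ limits $\hat\beta_{t,i}\to\beta(1+\rho)$ of the OLS slopes and plug them into the conditional risk. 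Your limiting values ($\sigma_X^2\beta^2(1-\rho^2)+\sigma^2$ for each individual forecast, $\sigma_X^2\beta^2(1-\rho^2)(1-\rho)/2+\sigma^2$ for SA, and the quadratic $\sigma_X^2\beta^2[(1-w_1)^2+(1-w_2)^2]+\sigma^2$ when $\rho=0$) all agree with the paper's, and your treatment of the two weight-optimization claims is the same. One point in your favor: the interchange of limit and expectation for $\mathbb E(\beta-\hat\beta_{t,1})^2$, which the paper dismisses as ``verified straightforwardly by noting that the $\mathbf x_t$'s are normally distributed,'' is exactly the step you isolate and handle via uniform integrability on the event that $\sum_{j<t}x_{j,1}^2$ is bounded below; this makes your sketch, if anything, more complete than the paper's.
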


The proof of Proposition~\ref{prop:case1} is similar to that of
Proposition~\ref{prop:case2}. In the following, we provide a sketch
for the proof of Proposition~\ref{prop:case2}.

\begin{proof}[Proof of Proposition~\ref{prop:case2}]
Let $r_{T,1}=\mathbb E(y_T-\hat
y_{T,1})^2$, $r_{T,1}=\mathbb E(y_T-\hat y_{T,2})^2$ and $r_{T,\mathbf
  w}=\mathbb E(y_T-\hat y_{T,\mathbf w})^2$ be the
point-wise forecast risks at time $T$ for forecaster 1, forecaster 2 and the
combined forecast,
respectively. We {will }first verify that under the restriction $\Theta=\{\mathbf w:
w_1+w_2=1\}$,
\begin{align*}
r_{T+1,1} &= \sigma^2\Bigl(1+\frac{1}{T-2}\Bigr)+\sigma_{X_2}^2\beta_2^2+\sigma_{X_1}^2\beta_2^2\mathbb
E\Bigl(\hat\rho^2\frac{\hat\sigma_{X_2}^2}{\hat\sigma_{X_1}^2}\Bigr)
-2\rho\sigma_{X_1}\sigma_{X_2}\beta_2^2\mathbb
E\Bigl(\hat\rho\frac{\hat\sigma_{X_2}}{\hat\sigma_{X_1}}\Bigr),\\
r_{T+1,2} &= \sigma^2\Bigl(1+\frac{1}{T-2}\Bigr)+ \sigma_{X_1}^2\beta_1^2+\sigma_{X_2}^2\beta_1^2\mathbb
E\Bigl(\hat\rho^2\frac{\hat\sigma_{X_1}^2}{\hat\sigma_{X_2}^2}\Bigr)
-2\rho\sigma_{X_1}\sigma_{X_2}\beta_1^2\mathbb
E\Bigl(\hat\rho\frac{\hat\sigma_{X_1}}{\hat\sigma_{X_2}}\Bigr),{ \text{and}}\\
r_{T+1,\mathbf w} &= \sigma^2(1-w_1^2-w_2^2)+w_1^2 r_{T+1,1} + w_2^2 r_{T+1,2} +
2w_1w_2\Bigl(\rho\sigma_{X_1}\sigma_{X_2}\beta_1\beta_2\bigl(1+\mathbb
E(\hat\rho)^2\bigr)\\
&\qquad\qquad  - \sigma_{X_1}^2\beta_1\beta_2\mathbb
E\bigl(\hat\rho\frac{\hat\sigma_{X_2}}{\hat\sigma_{X_1}}\bigr)
-\sigma_{X_2}^2\beta_1\beta_2\mathbb
E\bigl(\hat\rho\frac{\hat\sigma_{X_1}}{\hat\sigma_{X_2}}\bigr) +
\frac{\rho\sigma_{X_1}\sigma_{X_2}\sigma^2}{T}\mathbb E\bigl(\frac{\hat\rho}{\hat\sigma_{X_1}\hat\sigma_{X_2}}\bigr)\Bigr),
\end{align*}
where $\hat\sigma_{X_i}=\sqrt{\sum_{t=1}^T x_{t,i}^2/T}$ is the
estimated covariate standard deviation  ($i=1,2$) and $\hat\rho=\frac{\sum_{t=1}^T x_{t,1}x_{t,2}}{T
  \hat\sigma_{X_1}\hat\sigma_{X_2}}$ is the estimated covariate
correlation. 

{First, we have}
\begin{align*}
r_{T+1,1} &= \mathbb
E(y_{T+1}-x_{T+1,1}\hat\beta_{T+1,1})^2\\
&=\mathbb E\Bigl(\varepsilon_{T+1}+x_{T+1,1}\beta_1+x_{T+1,2}\beta_2 -
\frac{x_{T+1,1}\sum_{t=1}^Tx_{t,1}y_t}{\sum_{t=1}^Tx_{t,1}^2}\Bigr)^2\\
& = \sigma^2+ \mathbb E\Bigl(x_{T+1,1}\beta_1+x_{T+1,2}\beta_2 -
\frac{x_{T+1,1}\sum_{t=1}^Tx_{t,1}(x_{t,1}\beta_1+x_{t,2}\beta_2+\varepsilon_t)}{\sum_{t=1}^Tx_{t,1}^2}\Bigr)^2\\
&=\sigma^2+\mathbb E(x_{T+1,2}\beta_2)^2 + \mathbb
E\Bigl((x_{T+1,1}\beta_2)^2\bigl(\frac{\sum_{t=1}^Tx_{t,1}x_{t,2}}{\sum_{t=1}^Tx_{t,1}^2}\bigr)^2\Bigr)
+\mathbb
E\Bigl(\frac{x_{T+1,1}^2(\sum_{t=1}^Tx_{t,1}\varepsilon_t)^2}{(\sum_{t=1}^Tx_{t,1}^2)^2}\Bigr)\\
& \qquad - 2\,\mathbb
E\Bigl(\frac{x_{T+1,1}x_{T+1,2}\beta_2^2\sum_{t=1}^Tx_{t,1}x_{t,2}}{\sum_{t=1}^Tx_{t,1}^2}\Bigr)\\
&= \sigma^2+\sigma_{X_2}^2\beta_2^2+\sigma_{X_1}^2\beta_2^2\mathbb
E\Bigl(\hat\rho^2\frac{\hat\sigma_{X_2}^2}{\hat\sigma_{X_1}^2}\Bigr) +\frac{\sigma^2}{T-2}-2\rho\sigma_{X_1}\sigma_{X_2}\beta_2^2\mathbb
E\Bigl(\hat\rho\frac{\hat\sigma_{X_2}}{\hat\sigma_{X_1}}\Bigr).
\end{align*}
{The expression for }$r_{T+1,2}$ can be derived similarly. For $r_{T+1,\mathbf w}$,
we have
\begin{align*}
r_{T+1,\mathbf
  w} &=\mathbb E(y_{T+1}-w_1\hat y_{T+1,1}-w_2\hat y_{T+1,2})^2\\
&= \sigma^2+\mathbb
E\Bigl(w_1(x_{T+1,1}\beta_1+x_{T+1,2}\beta_2-x_{T+1,1}\hat\beta_{T+1,1})\\
& \qquad{} \qquad{} +w_2(x_{T+1,1}\beta_1+x_{T+1,2}\beta_2-x_{T+1,2}\hat\beta_{T+1,2})\Bigr)\\
&= \sigma^2(1-w_1^2-w_2^2)+ w_1^2r_{T+1,1}+w_2^2r_{T+1,2}\\
&\qquad +2w_1w_2\mathbb
E\Bigl((x_{T+1,1}\beta_1+x_{T+1,2}\beta_2-x_{T+1,1}\hat\beta_{T+1,1})\\
&\qquad{} \qquad{} \qquad{} \times (x_{T+1,1}\beta_1+x_{T+1,2}\beta_2-x_{T+1,2}\hat\beta_{T+1,2})\Bigr)\\
&=: \sigma^2(1-w_1^2-w_2^2)+ w_1^2r_{T+1,1}+w_2^2r_{T+1,2}+2w_1w_2A_1.
\end{align*}
With tedious algebra, it is not hard to show that
\begin{align*}
A_1&=\rho\sigma_{X_1}\sigma_{X_2}\beta_1\beta_2\bigl(1+\mathbb
E(\hat\rho)^2\bigr) - \sigma_{X_1}^2\beta_1\beta_2\mathbb
E\left(\hat\rho\frac{\hat\sigma_{X_2}}{\hat\sigma_{X_1}}\right) -\sigma_{X_2}^2\beta_1\beta_2\mathbb
E\left(\hat\rho\frac{\hat\sigma_{X_1}}{\hat\sigma_{X_2}}\right)\\
& \qquad{} + \frac{\rho\sigma_{X_1}\sigma_{X_2}\sigma^2}{T}\mathbb E\left(\frac{\hat\rho}{\hat\sigma_{X_1}\hat\sigma_{X_2}}\right).
\end{align*}
Together with the previous display, we verify the formula for
$r_{T+1,\mathbf w}$. The formulas \eqref{eq:ratio22} and \eqref{eq:weight_improve} can be
verified straightforwardly by noting that {the }$\mathbf
x_t$'s are normally distributed and {that }$r_{T,i}/R_{T,i}\rightarrow
1$ as $T\rightarrow\infty$ ($i=1,2$). When there is no restriction on
$\mathbf w$, $r_{T+1,\mathbf w}$ {can be derived }similarly as above{. Then, we can }show that when $\mathbf w=(1,1)^T$, $\lim_{T\rightarrow\infty}
R_{T,\mathbf w}=\sigma^2$, which implies \eqref{eq:weight_improve1}.

\end{proof}

\bibliographystyle{agsm}
\bibliography{references}{}

\end{document}